\newtheorem{theorem}{\textbf{Theorem}}
\newtheorem{lemma}{\textbf{Lemma}}
\newtheorem{proposition}{\textbf{Proposition}}
\newtheorem{corollary}{\textbf{Corollary}}
\newtheorem{remark}{\textbf{Remark}}
\newtheorem{definition}{\textbf{Definition}}
\newtheorem{notation}{\textbf{Notation}}
\newtheorem{example}{\textbf{Example}}
\newcommand{\N}{\mathbb{N}}
\newcommand{\Z}{\mathbb{Z}}
\newcommand{\B}{\mathbb{B}}
\newcounter{finaln}
\newcommand{\entiers}[2][2]{\ifthenelse{\equal{#1}{2}}
{\llbracket #2\rrbracket}
{\ifthenelse{\equal{#1}{0}}{
\ifthenelse{\equal{11}{\the\catcode`#2}}
{\{0,\ldots,#2 -1\}}
{\setcounter{finaln}{#2 -1}
\{0,\ldots,\thefinaln \}}}
{\{1,\ldots,#2\}}}}
\renewcommand{\int}[1]{\entiers{#1}}
\newcommand{\mmjblock}[2]{{#1}_{(#2)}}
\newcommand{\mmjoblock}[2]{{#1}_{\{#2\}}}
\newcommand{\mmjoblocklimit}[2]{{#1}^{\Omega}_{\{#2\}}}
\newcommand{\mmj}[2]{{#1}_{#2}}
\newcommand{\Wl}{(W_\ell)_{\ell \in \entiers{p}}}
\newcommand{\Sk}{\{S_k\}_{k \in \entiers{s}}}
\newcommand{\ispe}{\hat{\imath}}
\newcommand{\jspe}{\hat{\jmath}}
\newcommand{\kspe}{\hat{k}}
\newcommand{\BPn}{\mathsf{BP}_n}
\newcommand{\BP}[1]{\mathsf{BP}_{#1}}
\newcommand{\BSn}{\mathsf{BS}_n}
\newcommand{\BS}[1]{\mathsf{BS}_{#1}}
\newcommand{\lcm}{\text{lcm}}
\renewcommand{\gcd}{\text{gcd}}
\title{On countings and enumerations\\of block-parallel automata networks}
\author[1,2]{K{\'e}vin Perrot}
\author[1,2]{Sylvain Sen{\'e}}
\author[2]{L{\'e}ah Tapin}
\affil[1]{Universit{\'e} publique, Marseille, France}
\affil[2]{Aix-Marseille Univ, CNRS, LIS, Marseille, France}
\date{}
\begin{document}

\renewcommand{\labelitemi}{$\bullet$}
\renewcommand{\labelitemii}{$\bullet$}
\setlist[itemize,enumerate]{nosep}

\maketitle

\begin{abstract}
	When we focus on finite dynamical systems from both the 
	computability/com\-ple\-xi\-ty and the modelling standpoints, automata 
	networks seem to be a particularly appropriate mathematical model on which 
	theory shall be developed. 
	In this paper, automata networks are finite collections of entities (the 
	automata), each automaton having its own set of possible states, which 
	interact with each other over discrete time, interactions being defined as 
	local functions allowing the automata to change their state according to the 
	states of their neighbourhoods.
	The studies on this model of computation have underlined the very importance 
	of the way (\emph{i.e.}~the schedule) according to which the automata update 
	their states, namely the update modes which can be deterministic, periodic, 
	fair, or not. 
	Indeed, a given network may admit numerous underlying dynamics, these latter 
	depending highly on the update modes under which we let the former evolve. 
	In this paper, we pay attention to a new kind of deterministic, periodic and 
	fair update mode family introduced recently in a modelling framework, called 
	the block-parallel update modes by duality with the well-known and studied 
	block-sequential update modes.
	More precisely, in the general context of automata networks, this work aims 
	at presenting what distinguish block-parallel update modes from 
	block-sequential ones, and at counting and enumerating them: in absolute 
	terms, by keeping only representatives leading to distinct dynamics, and by 
	keeping only representatives giving rise to distinct isomorphic limit 
	dynamics.
	Put together, this paper constitutes a first theoretical analysis of these 
	update modes and their impact on automata networks dynamics.
\end{abstract}

\section{Introduction}

Automata networks were born at the beginning of modern computer science in the 
1940\emph{'s}, notably through the seminal works of McCulloch and 
Pitts~\cite{J-McCulloch1943} on neural networks, and von Neumann on cellular 
automata~\cite{B-vonNeumann1966}, which have become since then widely studied 
models of computation.
The former is classically dived into a finite and heterogeneous structure (a 
graph) whereas the latter is dived into an infinite but regular structure (a 
lattice). 
Whilst there exist deep differences between them, they both belong to the 
family of automata networks, which groups together all the models defined 
locally by means of automata which interact with each 
other over discrete time so that the global computations they operate 
emerge from these local interactions governing them. 
These initial models gave rise to numerous studies around computability 
theory~\cite{CL-Kleene1956,J-Elspas1959,J-Smith1971,B-Conway1982,J-Cook2004} 
and complexity~\cite{J-Kari1994,C-Gamard2021}. 

Beyond these contributions to theoretical computer science, the end of the 
1960s has underlined the prominent role of finite automata networks on 
which we focus in this paper, and Boolean automata networks in particular, in 
the context of biological networks qualitative modelling, thanks to the notable 
works of Kauffman~\cite{J-Kauffman1969} and Thomas~\cite{J-Thomas1973}, who are 
the firsts to have emphasised that the genetic expression profiles can be 
captured by such models, and by the limit behaviours emerging from their 
underlying dynamical systems which can represent for instance phenotypes, 
cellular types, or even biological paces.
Since then, (Boolean) automata networks and extensions of them form the most 
widespread discrete models for gene regulation qualitative 
modelling~\cite{J-Thieffry1995,J-Mendoza1998,J-Shmulevich2002,J-Davidich2008,J-Karlebach2008}. 

From the fundamental standpoint, given an automata network $f$ and two 
different ways of updating the states of its automata over time, \emph{i.e.}~two 
distinct update modes $\mu$ and $\mu'$, the two underlying dynamical systems 
$f_\mu$ and $f_{\mu'}$ can clearly be different. 
In other terms, the update modes play a crucial role on the dynamics of 
automata networks, and acquiring a better understanding of their influence has 
become a hot topic in the domain since Robert's seminal works on discrete 
iterations~\cite{J-Robert1980,B-Robert1986}, leading to numerous further 
studies in the last two 
decades~\cite{J-Goles2008,J-Aracena2009,C-Goles2010,T-Noual2012,J-Aracena2013,J-Noual2018,M-Rios2021,BC-Pauleve2022}. 
This subject is all the more pertinent from both theoretical and applied 
standpoints. 
Indeed, 
\emph{(i)} update modes can be of different natures (deterministic or not, 
periodic or not, ...) and are in infinite and uncountable quantity; and
\emph{(ii)} if we consider automata networks as models of genetic regulation 
networks for instance, we still do not know which ``natural schedules'' govern 
gene expression and regulation even if chromatin dynamics seems to play a key 
role~\cite{J-Hubner2010,J-Fierz2019}.

In~\cite{J-Demongeot2020}, in the framework of biological regulation modelling, 
the authors have introduced a new periodic update mode family, by underlining 
its ``natural'' computational power: this family has features which 
can break the classical property of fixed point set invariance (local update 
repetitions into a period are notably possible) and it allows to capture 
endogenous biological timers/clocks of genetic or physiological nature/origin 
such as those induced with chromatin dynamics. 
Until then, the works addressing the role of periodic update modes focused on 
block-sequential update modes, namely modes in which automata are partitioned 
into a list of subsets such that the automata of a same subset update their 
state all at once while the subsets are iterated sequentially.
Block-parallel update modes are defined dually. 
Rather than being defined as ordered partitions of the set of automata, they 
are defined as sets of lists, or ``partitioned orders'', so that the 
automata of a same list update their state sequentially according to the period 
of the list while the lists are triggered all at once at the initial time step. 

Because they allow local update repetitions which constitute the basis 
for generating non expected dynamical system limit sets and because they seem to 
have a promising role in terms of modelling, we are convinced that they need to 
be addressed in detail, and in different frameworks.
In this paper, we give the first theoretical analysis of combinatorial aspects 
related to block-parallel update modes, in the context of discrete dynamical 
systems underlying automata networks. 
This formal introduction of these update modes \emph{per se} will serve as a 
solid basis for and should pave the way to further developments on the dynamics 
of block-parallel automata networks, which are much more difficult to 
understand than those of classical block-sequential ones.\smallskip

In Section~\ref{s:definitions}, the main definitions and notations which are 
used throughout the paper are presented. 
Section~\ref{s:contributions} develops our main contributions and is divided 
into five parts. 
Section~\ref{s:BSinterBP} deals with the intersection between block-sequential 
and block-parallel update modes; it characterises in particular when a 
block-sequential mode is a block-parallel one and vice-versa.
Section~\ref{s:BP} aims at addressing block-parallel modes in absolute terms; 
we give two neat closed formulas for counting them
(one of which comes from the literature) and provide an algorithm to 
enumerate them.
After having performed numerical simulations of the dynamics of block-parallel 
automata networks, we have observed that, for any automata networks, certain 
block-parallel modes are intrinsically similar in the sense that they always 
lead to exact same dynamics. 
Section~\ref{s:BP0} gives closed formulas for counting intrinsically different 
block-parallel modes in terms of automata network dynamics and an algorithm is 
developed to enumerate them.
In Section~\ref{s:BPstar}, closed formulas and an enumeration algorithm
are provided for intrinsically different block-parallel modes, but up to
isomorphic limit dynamics of automata networks this time.
As numerical experiments exposed in Section~\ref{s:implementations} suggest, it decreases
the number of elements to consider by some order of magnitude,
when one is interested in the asymptotic (limit) behaviour of dynamical systems.
Our code is freely available online (see Section~\ref{s:implementations}).
We conclude this paper by giving perspectives of our work in 
Section~\ref{s:conclusion}.

\section{Definitions}
\label{s:definitions}

Let $\entiers{n} = \entiers[0]{n}$,
let $\B = \{0,1\}$,
let $x_i$ denote the $i$-th component of vector $x \in \B^n$,
let $x_I$ denote the projection of $x$ onto an element of $\B^{|I|}$
for some subset of automata $I \subseteq \entiers{n}$,
let $e_i$ be the $i$-th base vector,
and $\forall x, y \in \B^n$, let $x+y$ denote the bitwise addition modulo two.
Let $\sigma^i$ denote the circular-shift of order $i\in\Z$ on sequences
(shifting the element at position $0$ towards position $i$).
Let $\sim$ denote the graph isomorphism, \emph{i.e.}~for $G = (V,A)$ and 
$G' = (V',A')$ we have $G \sim G'$ if and only if there is a bijection 
$\pi : V \to V'$ such that $(u,v) \in A \iff (\pi(u), \pi(v)) \in A'$.

\paragraph{Automata network} 

An \emph{automata network} (AN) of size $n$ is a discrete dynamical system 
composed of a set of $n$ \emph{automata} $\entiers{n}$, each holding a state 
within a finite alphabet $X_i$ for $i \in \entiers{n}$.
A \emph{configuration} is an element of $X = \prod_{i \in \entiers{n}} X_i$.
An AN is defined by a function $f : X \to X$, decomposed into $n$ 
\emph{local functions} $f_i : X \to X_i$ for $i \in \entiers{n}$, where $f_i$ 
is the $i$-th component of $f$.
To let the system evolve, one must define when the automata update their state
using their local function, which can be done in multiple ways.

\paragraph{Block-sequential update modes} 

A sequence $\Wl$ with $W_\ell \subseteq \entiers{n}$ for all $\ell \in \int{p}$ 
is an \emph{ordered partition} if and only if:
\[
  \bigcup_{\ell \in \int{p}} W_\ell = \entiers{n}
  \text{ and }
  \forall i, j \in \int{p}, i \neq j \implies W_i \cap W_j = \emptyset\text{.}
\]
An update mode $\mu = \Wl$ is called \emph{block-sequential} when
$\mu$ is an ordered partition, and the $W_\ell$ are called \emph{blocks}.
The set of block-sequential update modes of size $n$ is denoted $\BSn$.
The update of $f$ under $\mu \in \BSn$ is given by 
$\mmjblock{f}{\mu} : X \to X$ as follows:
\[
  \mmjblock{f}{\mu}(x) = 
  	\mmjblock{f}{W_{p-1}} \circ \dots \circ \mmjblock{f}{W_1} \circ 
	  \mmjblock{f}{W_0}(x)\text{,}
\]
where for all $\ell \in \int{p}$:
\[
  \forall i \in \entiers{n}, 
  \mmjblock{f}{W_\ell}(x)_i =
	  \begin{cases}
  	  f_i(x) & \text{if } i \in W_\ell\text{,}\\
    	x_i & \text{otherwise.}
	  \end{cases}
\]

\paragraph{Block-parallel update modes}

In a block-sequential update mode, the automata in a block are updated 
simultaneously while the blocks are updating sequentially. 
A block-parallel update mode is based on the dual principle: 
the automata in a block are updated sequentially while the blocks are updated 
simultaneously. 
Instead of being defined as a sequence of unordered blocks, a block-parallel 
update mode will thus be defined as a set of ordered blocks. 
A set $\Sk$ with $S_k = (i^k_0, \dots, i^k_{n_{k}-1})$ a sequence of $n_k>0$ 
elements of $\entiers{n}$ for all $k \in \int{s}$ is a \emph{partitioned order} 
if and only if:
\[
  \bigcup_{k \in \int{s}} S_k = \entiers{n}
  \text{ and }
  \forall i, j \in \int{s}, i \neq j \implies S_i \cap S_j = \emptyset\text{.}
\]
An update mode $\mu = \Sk$ is called \emph{block-parallel} when $\mu$ is a 
partitioned order, and the sequences $S_k$ are called \emph{o-blocks} (for 
\emph{ordered-blocks}).
The set of block-parallel update modes of size $n$ is denoted $\BPn$.
With $p = \lcm(n_1, \dots, n_s)$, the update of $f$ under $\mu \in \BPn$ is 
given by $\mmjoblock{f}{\mu} : X \to X$ as follows:
\[
	\mmjoblock{f}{\mu}(x) = 
		\mmjblock{f}{W_{p-1}} \circ \dots \circ \mmjblock{f}{W_1} \circ
		\mmjblock {f}{W_0}(x)\text{,}
\]
where for all $\ell \in \int{p}$ we define 
$W_\ell = \{i^k_{\ell \mod n_k} \mid k \in \int{s}\}$.

\paragraph{Basic considerations}

As it may sound natural from the definitions above, there is a natural way to 
convert a block-parallel update mode $\Sk$ with 
$S_k = (i^k_0, \dots, i^k_{n_{k-1}})$ into a sequence of blocks of length 
$p = \lcm(n_1, \dots, n_s)$.
We define it as $\varphi$:
\[
  \varphi(\Sk) = 
  	\Wl \text{ with } W_\ell = 
  		\{i^k_{\ell \mod n_k} \mid k \in \int{s}\}\text{.}
\]
In order to differentiate between sequences of blocks and sets of o-blocks,
we denote by $\mmjblock{f}{\mu}$ (resp.~$\mmjoblock{f}{\mu}$) the dynamical 
system induced by $f$ and $\mu$ when $\mu$ is a sequence of blocks (resp.~a set 
of o-blocks), and simply $\mmj{f}{\mu}$ when it is clear from the context.
Moreover, abusing notations, we denote by $\varphi(\BPn)$ the set of 
partitioned orders of $\entiers{n}$ as sequences of blocks.

Block-sequential and block-parallel update modes are \emph{periodic} (the same 
update procedure is repeated at each step), and \emph{fair} (each automaton is 
updated at least once per step).
We distinguish the concepts of \emph{step} and \emph{substep}.
A step is the interval between $x$ and $\mmjblock{f}{\mu}(x)$ (or 
$\mmjoblock{f}{\mu}(x)$), and can be divided into $p = |\mu|$ 
(or $p = |\varphi(\mu)| = \lcm(n_1, \dots, n_s)$) substeps,
corresponding to the elementary intervals in which only one block of automata
is updated.
The most basic update mode is the parallel $\mu_\textsf{par}$ which updates 
simultaneously all automata at each step. 
It is the element $(\entiers{n}) \in \BSn$ and 
$\{(i) \mid i \in \entiers{n}\} \in \BPn$, with 
$\varphi(\{(i) \mid i \in \entiers{n}\}) = (\entiers{n})$.

\begin{remark}\normalfont
	\label{remark:repet}
  Observe that in block-sequential update modes, each automaton is updated 
  exactly once during a step, whereas in block-parallel update modes, some 
  automata can be updated multiple times during a step.
  Update repetitions may have many consequences on the limit dynamics.
  For instance, the network of $n = 3$ automata such that 
  $f_i(x) = x_{i-1 \mod n}$ (\emph{i.e.}~a positive cycle of size $3$) under 
  the update mode $\mu = (\{1,2\}, \{0,2\}, \{0,1\})$ has $4$ fixed points, 
  among which $2$ cannot be obtained with block-sequential update modes
  (in this example, $\mu \notin \BPn$).
\end{remark}

\begin{remark}\normalfont
  \label{remark:BP-bloc-equal-size}
  Let $\mu = \Sk$ be a block-parallel update mode. 
  Each block of $\varphi(\mu)$ is of the same size, namely $s$, and furthermore 
  each block of $\varphi(\mu)$ is unique.
\end{remark}

\paragraph{Fixed points, limit cycles and attractors}

Let $\mmj{f}{\mu}$ be the dynamical system defined by an AN $f$ of size $n$ and
an update mode $\mu$.

Let $p \geq 1$. 
A sequence of configurations $x^0, \ldots, x^{p-1} \in X$ is a \emph{limit 
cycle} of $\mmj{f}{\mu}$ if and only if $\forall i \in \entiers{p},
\mmj{f}{\mu}(x^i) = x^{i+1 \mod p}$.
A limit cycle of length $p = 1$ is a \emph{fixed point}.
The sequence of configurations $x^0, x^1, \ldots, x^{p-1} \in X$ is an 
\emph{attractor} if and only if it is a limit cycle and there exist $x\in X$ 
and $i\in\entiers{p}$ such that $\mmj{f}{\mu}(x) = x^i$ but 
$x \notin \{x^0, \ldots, x^{p-1}\}$.

\begin{example}\normalfont
	\label{ex:updates}
	Let $f : \entiers{3} \times \B \times \B \to \entiers{3} \times \B \times \B$ 
	the automata network defined as:
	\[
		f(x) = \left(
			\begin{array}{ll}
				f_0(x) & = \begin{cases}
					0 & \text{if } ((x_0 = 0) \land (x_1 = x_2)) \lor
							(x_0 = x_1 = x_2 = 1)\\
					1 & \text{if } x_1 + x_2 \mod 2 = 1\\
					2 & \text{otherwise} 
				\end{cases}\\
				f_1(x) & = (x_0 \neq 0) \lor x_1 \lor x_2\\
				f_2(x) & = ((x_0 = 1) \land x_1) \lor (x_0 = 2)\\
			\end{array}
		\right)\text{.}
	\]
	Let $\mu_{\sf bs} = (\{1\}, \{0, 2\})$ and $\mu_{\sf bp} = \{(0), (2, 1)\}$.
	The update mode $\mu_{\sf bs}$ is block-sequential and $\mu_{\sf bp}$ is
	block-parallel, with $\varphi(\mu_{\sf bp}) = (\{0, 2\}, \{0,1\})$.
	Systems $\mmjblock{f}{\mu_{\sf bs}}$ and $\mmjoblock{f}{\mu_{\sf bp}}$ have
	different dynamics, as depicted in Figure~\ref{fig:ex_def}. 
	They both have the same two fixed points and one limit cycle, but the 
	similarities stop there. 
	The limit cycle of $\mmjblock{f}{\mu_{\sf bs}}$ is of size 4, while that of
	$\mmjoblock{f}{\mu_{\sf bp}}$ is of size 2. 
	Moreover, neither of the fixed points of $\mmjoblock{f}{\mu_{\sf bp}}$ is an 
	attractor, while one of $\mmjblock{f}{\mu_{\sf bs}}$, namely $211$, is.
	Both of these update modes' dynamics are unique in $\BP{3} \cup \BS{3}$.

	\begin{figure}
		\centerline{
			\includegraphics[scale=.9]{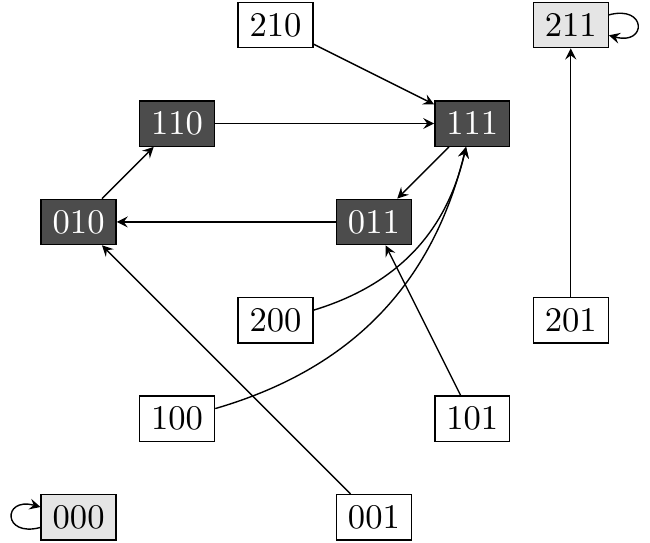}
			\hspace*{1cm}
			\includegraphics[scale=.9]{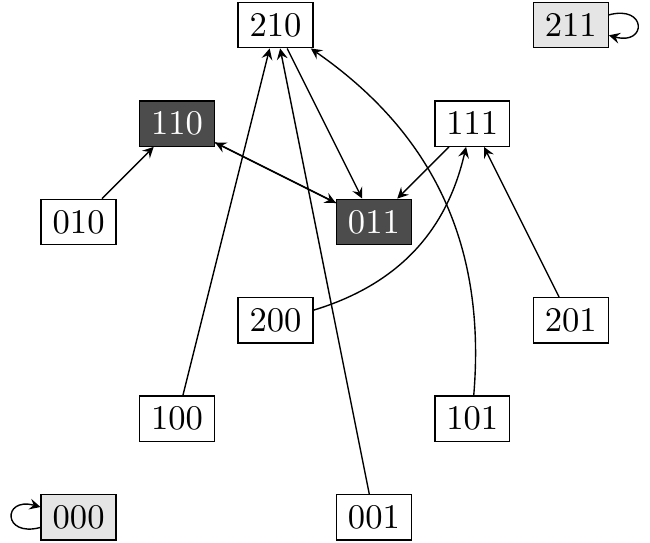}
	  }
  	\caption{The dynamics of $\mmjblock{f}{\mu_{\sf bs}}$ (left) and
	  $\mmjoblock{f}{\mu_{\sf bp}}$ (right) from Example~\ref{ex:updates}.}
  	\label{fig:ex_def}
	\end{figure}
\end{example}

\section{Counting and enumerating block-parallel update modes}
\label{s:contributions}

For the rest of this section,
let $p(n)$ denote the number of integer partitions of $n$
(multisets of integers summing to $n$),
let $d(i)$ be the maximal part size in the $i$-th partition of $n$,
let $m(i,j)$ be the multiplicity of the part of size $j$ in the $i$-th partition 
of $n$.
As an example, let $n = 31$ and assume the $i$-th partition is 
$(2,2,3,3,3,3,5,5,5)$, we have $d(i) = 5$ and $m(i,1) = 0$, $m(i,2) = 2$, 
$m(i,3) = 4$, $m(i,4) = 0$, $m(i,5) = 3$.
A partition will be the support of a partitioned order, where each part is an 
o-block.
In our example, we can have:
\[
  \begin{array}{l}
    \{(0,1),(2,3),(4,5,6),(7,8,9),(10,11,12),(13,14,15),\\
    (16,17,18,19,20),(21,22,23,24,25),(26,27,28,29,30)\}\text{,}
  \end{array}
\]
and we picture it as the following \emph{matrix-representation}:
\[
  \begin{pmatrix}
    0&1\\
    2&3
  \end{pmatrix}
  \begin{pmatrix}
    4&5&6\\
    7&8&9\\
    10&11&12\\
    13&14&15
  \end{pmatrix}
  \begin{pmatrix}
    16&17&18&19&20\\
    21&22&23&24&25\\
    26&27&28&29&30
  \end{pmatrix}
  \text{.}
\]
We call \emph{matrices} the elements of size $j \cdot m(i,j)$ and denote them 
$M_1, \ldots, M_{d(i)}$, where $M_j$ has $m(i,j)$ \emph{rows} and $j$ 
\emph{columns} ($M_j$ is empty when $m(i,j) = 0$).
The partition defines the matrices' dimensions, and each row is an o-block.\medskip

For the comparison, the block-sequential update modes (ordered partitions of 
$\entiers{n}$) are given by the ordered Bell numbers, sequence 
A000670 of OEIS~\cite{oeisA000670,ns11}.
A closed formula for it is:
\[
  |\BS{n}|=
	  \sum_{i = 1}^{p(n)}
  		\frac{n!}{\prod_{j = 1}^{d(i)} 
  			(j!)^{m(i,j)}}
  	\cdot
	  \frac{\left( \sum_{j = 1}^{d(i)} m(i,j) \right)!}
		  {\prod_{j = 1}^{d(i)} m(i,j)!}\text{.}
\]
Intuitively, an ordered partition of $n$ 
gives a support to construct a block-sequential update mode:
place the elements of $\entiers{n}$ up to permutation within the blocks.
This is the left fraction: $n!$ divided by $j!$ for each block of size $j$,
taking into account multiplicities.
The right fraction corrects the count because we sum on $p(n)$ the (unordered) 
partitions of $n$:
each partition of $n$ can give rise to different 
ordered partitions of $n$, 
by ordering all blocks (numerator, where the sum of multiplicities is the number 
of blocks) up to permutation within blocks of the same size which have no effect 
(denominator).
The first ten terms are ($n = 1$ onward):
\[
	1, 3, 13, 75, 541, 4683, 47293, 545835, 7087261, 102247563\text{.}
\]

\subsection{Intersection of block-sequential and block-parallel modes}
\label{s:BSinterBP}

In order to be able to compare block-sequential with block-parallel update
modes, both of them will be written here under their sequence of blocks form
(the classical form for block-sequential update modes and the rewritten form for 
block-parallel modes).

First, we know that $\varphi(\textsf{BP}_n) \cap \textsf{BS}_n$ is not empty, 
since it contains at least 
\[
  \mu_\textsf{par}
  = (\entiers{n})
  = \varphi(\{(0), (1), \ldots, (n-1)\})\text{.}
\]
However, neither $\BSn \subseteq \varphi(\BPn)$ nor $\varphi(\BPn) \subseteq
\BSn$ are true.
Indeed, $\mu_s = (\{0, 1\}, \{2\}) \in \mathsf{BS}_3$ but $\mu_s \notin
\varphi(\mathsf{BP}_3)$ since a block-parallel cannot have blocks of different 
sizes in its sequential form. 
Symmetrically, $\mu_p = \varphi(\{(1, 2), (0)\}) = 
(\{0,1\}, \{0, 2\}) \in \mathsf{BP}_3$ but $\mu_p \notin \mathsf{BS}_3$ since
automaton $0$ is updated twice.
Despite this, we can precisely define the intersection 
$\BSn \cap \varphi(\BPn)$.

\begin{lemma}
	\label{lemma:BPequivBS}
  Let $\mu$ be an update mode written as a sequence of blocks of elements in 
  $\entiers{n}$.  
  Then $\mu \in (\BSn \cap \varphi(\BPn))$ if and only if $\mu$ is 
  an ordered partition and all of $\mu$'s blocks are of the same size.
\end{lemma}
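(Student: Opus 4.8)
The plan is to prove the two implications separately. The forward direction will follow almost immediately from the definitions together with Remark~\ref{remark:BP-bloc-equal-size}, while the reverse direction will require exhibiting an explicit partitioned order, which is the only genuinely constructive part of the argument.

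For the forward implication, suppose $\mu \in (\BSn \cap \varphi(\BPn))$. Because $\mu \in \BSn$, it is by definition an ordered partition, which yields the first half of the right-hand side for free. Because $\mu \in \varphi(\BPn)$, we may write $\mu = \varphi(\nu)$ for some partitioned order $\nu = \Sk$, and Remark~\ref{remark:BP-bloc-equal-size} tells us directly that every block of $\varphi(\nu)$ has the same size, namely $s$ (the number of o-blocks of $\nu$). Hence all blocks of $\mu$ share a common size, completing this direction with essentially no extra work.

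For the reverse implication, assume $\mu = \Wl$ is an ordered partition all of whose $p$ blocks have the same size $s$ (so in particular $n = p\cdot s$). Then $\mu \in \BSn$ is immediate from the definition of block-sequential modes, so the only thing left is to produce a partitioned order $\nu$ with $\varphi(\nu) = \mu$, which will witness $\mu \in \varphi(\BPn)$. I would build $\nu$ by a transpose-style construction on the rectangular array whose $\ell$-th column is the block $W_\ell$: first fix, for each $\ell \in \int{p}$, an arbitrary enumeration $W_\ell = \{a^\ell_k \mid k \in \int{s}\}$ of its $s$ elements, and then read across the blocks to define $s$ o-blocks, each of length exactly $p$, by $S_k = (a^0_k, a^1_k, \dots, a^{p-1}_k)$ for $k \in \int{s}$. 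Setting $\nu = \Sk$, I would check that $\nu$ is a genuine partitioned order: since $\mu$ is a partition, every element of $\entiers{n}$ lies in exactly one $W_\ell$ and hence occurs in exactly one position of exactly one $S_k$, so the o-blocks are pairwise disjoint and their union is $\entiers{n}$, and each length $n_k = p \geq 1 > 0$ as required.

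It then remains to verify $\varphi(\nu) = \mu$. Since all o-block lengths equal $p$, we have $\lcm(n_1, \dots, n_s) = p$, so $\varphi(\nu)$ is a sequence of exactly $p$ blocks whose $\ell$-th block is $\{i^k_{\ell \mod n_k} \mid k \in \int{s}\} = \{a^\ell_k \mid k \in \int{s}\} = W_\ell$, recovering $\mu$ on the nose. The one point deserving a line of care, and the real crux of this direction, is that the construction is well-defined precisely because the equal-size hypothesis guarantees that each block contributes exactly one entry to each of the $s$ o-blocks, so that the $s \times p$ array has no holes and the resulting o-blocks all have the uniform length $p$; this is the single place where the equal-size assumption is used, and without it no such partitioned order could exist, in line with Remark~\ref{remark:BP-bloc-equal-size}.
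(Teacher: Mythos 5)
Your proposal is correct and follows essentially the same route as the paper: the forward direction invokes Remark~\ref{remark:BP-bloc-equal-size} exactly as the paper does, and your reverse direction uses the same transpose-style construction (enumerate each equal-sized block and read across to form the o-blocks $S_k$), merely spelling out the verification that the result is a partitioned order with $\varphi(\nu)=\mu$ in slightly more detail than the paper — which is if anything a small improvement, since the paper writes $S_k$ as a set where an ordered sequence is meant.
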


\begin{proof}
  Let $n \in \N$.
  
  $(\Longrightarrow)$ Let $\mu \in (\BSn \cap \varphi(\BPn))$. 
  Since $\mu \in \BSn$, $\mu$ is an ordered partition.
  Furthermore, $\mu \in \varphi(\BPn)$ so all the $\mu$'s blocks are of the same 
  size (Remark~\ref{remark:BP-bloc-equal-size}).

  $(\Longleftarrow)$ Let $\mu = \Wl$ be an ordered partition of $\entiers{n}$ with all its blocks 
  having the same size, denoted by $s$.
  Since $\mu$ is an ordered partition, $\mu \in \BSn$.
  For each $\ell \in \entiers{p}$, we can number arbitrarily the elements of $W_\ell$ 
  from $0$ to $s-1$ as $W_\ell=\{W_\ell^0,\dots,W_\ell^{s-1}\}$.
  Now, let us define the set of sequences $\Sk$ the following way: 
  $\forall k \in \entiers{s}, S_k = 
  \{ W_\ell^k \mid \ell\in\entiers{p} \}$.
  It is a partitioned order such that $\varphi(\Sk) = \mu$, which means
  that $\mu \in \varphi(\BPn)$.
\end{proof}

\begin{corollary}
	\label{cor:BPequivBS}
	If $\mu \in \BPn$ and is composed of $s$ o-blocks of size $p$, then 
	$\varphi(\mu) \in \BSn$ and is composed of $p$ blocks of size $s$.	
\end{corollary}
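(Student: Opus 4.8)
The plan is to unfold the definition of $\varphi$ under the extra hypothesis that all $s$ o-blocks share the common size $p$, and then to check that the resulting sequence of blocks is a genuine ordered partition. First I would count the blocks: since every $n_k$ equals $p$, the period is $\lcm(n_1,\dots,n_s)=\lcm(p,\dots,p)=p$, so $\varphi(\mu)=(W_0,\dots,W_{p-1})$ consists of exactly $p$ blocks. Their common size is handed to us by Remark~\ref{remark:BP-bloc-equal-size}: each $W_\ell$ selects one entry from each of the $s$ o-blocks, and these entries are pairwise distinct because the o-blocks are disjoint, so $|W_\ell|=s$. This already yields the ``$p$ blocks of size $s$'' part of the statement.

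The remaining and only substantive point is to establish $\varphi(\mu)\in\BSn$, i.e.\ that $(W_0,\dots,W_{p-1})$ is an ordered partition. For $\ell\in\int{p}$ we have $0\le \ell<p$, and since every block length equals $p$, the index reduction collapses: $\ell \bmod n_k=\ell\bmod p=\ell$. Consequently the element $i^k_a$ sitting at position $a$ of o-block $S_k$ belongs to $W_\ell$ if and only if $\ell=a$, so it lies in exactly one block. From this I would conclude that the $W_\ell$ are pairwise disjoint, and that their union is $\bigcup_{k\in\int{s}} S_k=\entiers{n}$ since $\mu$ is a partitioned order; hence $\varphi(\mu)$ is an ordered partition and belongs to $\BSn$.

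The main obstacle---though it is more a point of vigilance than a genuine difficulty---is precisely the no-repetition check of the previous paragraph. In a general block-parallel mode the period $\lcm(n_1,\dots,n_s)$ exceeds some $n_k$, so an automaton is revisited several times within a step and $\varphi(\mu)$ fails to be an ordered partition (this repetition being the whole novelty of block-parallel modes, cf.\ Remark~\ref{remark:repet}). It is only the equality ``block size $=$ period'', forced by the assumption that all o-blocks have size $p$, that rules this out. Once the ordered-partition property is in hand, one may optionally invoke the characterisation of Lemma~\ref{lemma:BPequivBS}: $\varphi(\mu)$ is an ordered partition all of whose blocks share the size $s$, so it indeed sits in $\BSn\cap\varphi(\BPn)$, confirming the claim together with its size bookkeeping.
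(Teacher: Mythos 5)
Your proof is correct and takes essentially the approach the paper intends: the corollary is stated there without proof, as an immediate consequence of the definition of $\varphi$, Remark~\ref{remark:BP-bloc-equal-size} and Lemma~\ref{lemma:BPequivBS}, and your argument is exactly the natural unfolding of that derivation (the $\lcm$ collapsing to $p$, block size $s$ from disjointness of the o-blocks, and the no-repetition check showing $\varphi(\mu)$ is an ordered partition). Nothing to correct.
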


As a consequence of Lemma~\ref{lemma:BPequivBS} and Corollary~
\ref{cor:BPequivBS}, given $n \in \N$, the set $\mathrm{SEQ}_n$ of sequential 
update modes such that every automaton is updated exactly once by step and only 
one automaton is updated by substep, is a subset of $(\BSn \cap \varphi(\BPn))$.

Moreover, we can state the following proposition which counts the number of 
sequences of blocks which belongs to both $\BSn$ and $\varphi(\BPn)$.

\begin{proposition}
	\label{prop:countBSinterBP}
	Given $n \in \N$, we have:
	\[
		|\BSn\cap \varphi(\BPn)| =
			\sum_{d | n}\frac{n!}{(\frac{n}{d}!)^d}\text{.}
	\]
\end{proposition}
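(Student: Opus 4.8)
The plan is to reduce the count to a purely combinatorial enumeration by invoking Lemma~\ref{lemma:BPequivBS}. That lemma tells us that a sequence of blocks $\mu$ lies in $\BSn \cap \varphi(\BPn)$ exactly when $\mu$ is an ordered partition of $\entiers{n}$ all of whose blocks share a common size. Hence computing $|\BSn \cap \varphi(\BPn)|$ amounts to counting ordered partitions of $\entiers{n}$ into equal-sized blocks, and nothing further about block-parallel modes is needed.

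Next I would parametrise such ordered partitions by their number of blocks. If an ordered partition of $\entiers{n}$ has $d$ blocks of common size $s$, then $ds = n$, so $s = n/d$ and $d$ must divide $n$; conversely every divisor $d \mid n$ yields the admissible common size $n/d$. For a fixed divisor $d$, an ordered partition into $d$ blocks each of size $n/d$ is the same data as a labelling of the elements of $\entiers{n}$ by their block index in $\entiers{d}$ in which every index is used exactly $n/d$ times; the number of such labellings is the multinomial coefficient
\[
  \binom{n}{\underbrace{n/d, \ldots, n/d}_{d}} = \frac{n!}{\left(\frac{n}{d}!\right)^{d}}\text{.}
\]
Because the blocks are arranged in a sequence, their positions are distinguished, so no quotient by permutations of equal-sized blocks is needed and this multinomial counts the corresponding ordered partitions exactly.

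Finally I would observe that ordered partitions associated with distinct divisors $d$ are genuinely distinct objects, since they have different numbers of blocks (and different block sizes); the families indexed by the divisors of $n$ are therefore pairwise disjoint, and their union is exactly the set of equal-sized ordered partitions. Summing the per-divisor counts then gives the announced identity
\[
  |\BSn \cap \varphi(\BPn)| = \sum_{d \mid n} \frac{n!}{\left(\frac{n}{d}!\right)^{d}}\text{.}
\]
There is no genuinely hard step here: the one point deserving care is the bookkeeping of which factorials appear, together with confirming that the multinomial needs no correcting factor, in contrast with the ordered Bell formula recalled above, where a right-hand fraction was required precisely because that sum ranged over \emph{unordered} integer partitions.
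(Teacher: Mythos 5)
Your proof is correct, and it rests on the same reduction as the paper's: Lemma~\ref{lemma:BPequivBS} identifies $\BSn \cap \varphi(\BPn)$ with the set of ordered partitions of $\entiers{n}$ into equal-sized blocks. The difference lies in how that set is then counted. The paper's proof stops at this identification and cites OEIS sequence A061095 (the number of ways of dividing $n$ labeled items into labeled boxes with equally many items in each box), so the divisor-sum formula is imported rather than derived. You instead derive it: parametrising by the number $d$ of blocks (which forces $d \mid n$ and common block size $n/d$), counting the ordered partitions for each fixed $d$ by the multinomial coefficient $n!/\left(\frac{n}{d}!\right)^d$ --- correctly noting that no quotient by permutations of equal-sized blocks is needed because the block positions in a sequence are distinguished --- and summing over the pairwise disjoint families indexed by divisors. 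Your argument is thus self-contained where the paper's is by citation; both are sound, and yours has the merit of making explicit the ordered-versus-unordered bookkeeping that the OEIS reference leaves implicit.
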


\begin{proof}
	The proof derives directly from the sequence A061095 of 
	OEIS~\cite{oeisA061095}, which counts the \emph{number of ways of dividing $n$ 
	labeled items into labeled boxes with an equal number of items in each box}.
	In our context, the ``items'' are the automata, and the ``labeled boxes'' are 
	the blocks of the ordered partitions.
\end{proof}

\subsection{Partitioned orders}
\label{s:BP}

A block-parallel update mode is given as a partitioned order, \emph{i.e.}~an 
(unordered) set of (ordered) sequences.
This concept is recorded as sequence A000262 of OEIS~\cite{oeisA000262}, 
described as the \emph{number of ``sets of lists''}.
A nice closed formula for it is:
\[
  |\BPn| = 
  	\sum_{i = 1}^{p(n)} 
  		\frac{n!}{\prod_{j=1}^{d(i)} 
  			m(i,j)!}\text{.}
\]
Intuitively, for each partition, fill all the matrices
($n!$ ways to place the elements of $\entiers{n}$)
up to permutation of the 
rows within each matrix
(matrix $M_j$ has $m(i,j)$ rows).
Another closed formula is presented in Proposition~\ref{prop:cardBPn}.
This formula is particularly useful to generate all the block-parallel update 
modes, in the sense that its parts help us construct Algorithm~\ref{algo:BP} 
which enumerates the partitioned orders of $\BPn$.

\begin{proposition}
	\label{prop:cardBPn}
  For any $n \geq 1$ we have:
  \[
  |\BPn| =
	  \sum_{i = 1}^{p(n)} 
	  	\prod_{j = 1}^{d(i)}
	  		\binom{n - \sum_{k = 1}^{j - 1} k \cdot m(i,k)}{j \cdot m(i,j)}
  				\cdot \frac{(j \cdot m(i,j))!}{m(i,j)!}\text{.}
  \]
\end{proposition}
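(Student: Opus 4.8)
The plan is to prove the identity by a direct enumeration that mirrors the matrix-representation introduced above. First I would observe that every block-parallel mode $\mu = \Sk \in \BPn$ determines a \emph{unique} integer partition of $n$, namely the multiset $\{n_1, \ldots, n_s\}$ of its o-block sizes: these sum to $n$ precisely because the o-blocks partition $\entiers{n}$ (the partitioned-order condition). This sorts $\BPn$ into $p(n)$ disjoint classes, the $i$-th class collecting those $\mu$ whose o-block sizes realise the $i$-th integer partition of $n$. Hence $|\BPn| = \sum_{i=1}^{p(n)} N_i$, where $N_i$ is the number of partitioned orders of type $i$, and it remains only to show that $N_i$ equals the inner product $\prod_{j=1}^{d(i)} \binom{n - \sum_{k=1}^{j-1} k\cdot m(i,k)}{j\cdot m(i,j)}\cdot \frac{(j\cdot m(i,j))!}{m(i,j)!}$.

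Next I would fix $i$ and count $N_i$ by building such a $\mu$ through its matrices $M_1, \ldots, M_{d(i)}$, where $M_j$ gathers the $m(i,j)$ o-blocks of size $j$ and therefore has $m(i,j)$ rows, $j$ columns, and $j\cdot m(i,j)$ cells in total; the matrices jointly partition $\entiers{n}$. I process the sizes $j$ in increasing order. When it is the turn of $M_j$, the smaller matrices have already consumed $\sum_{k=1}^{j-1} k\cdot m(i,k)$ automata, so there remain $n - \sum_{k=1}^{j-1} k\cdot m(i,k)$ of them; I choose the $j\cdot m(i,j)$ automata that will fill $M_j$ (this is the binomial factor) and then write them into its $j\cdot m(i,j)$ ordered cells, which can be done in $(j\cdot m(i,j))!$ ways. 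When $m(i,j)=0$ the matrix is empty and both the binomial and the factorial quotient equal $1$, so such sizes contribute trivially and cause no trouble.

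The key step—and the one requiring care—is the redundancy bookkeeping. Within a partitioned order the o-blocks form a \emph{set}, so permuting the $m(i,j)$ rows of a single matrix $M_j$ yields exactly the same $\mu$; dividing the $(j\cdot m(i,j))!$ cell-arrangements by $m(i,j)!$ accounts for this and produces the stated factor $\frac{(j\cdot m(i,j))!}{m(i,j)!}$. The heart of the argument is to verify that row permutation is the \emph{only} redundancy: reordering the $j$ entries \emph{inside} a row changes the ordered sequence and hence yields a genuinely different o-block (so column order must be preserved), and entries cannot be exchanged between matrices of different sizes without altering an o-block's length. Consequently, for the fixed type $i$ the construction produces every partitioned order exactly $\prod_{j=1}^{d(i)} m(i,j)!$ times over, and since the element-choices for distinct sizes are independent (they jointly and disjointly cover $\entiers{n}$) the counts multiply, giving $N_i = \prod_{j=1}^{d(i)} \binom{n-\sum_{k=1}^{j-1}k\cdot m(i,k)}{j\cdot m(i,j)}\cdot\frac{(j\cdot m(i,j))!}{m(i,j)!}$. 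Summing over $i$ yields the proposition. The main obstacle is thus not the arithmetic but establishing this bijection cleanly—pinning down that overcounting arises solely from intra-matrix row permutations and that the per-size factors genuinely factorise.
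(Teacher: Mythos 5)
Your proof is correct and follows essentially the same route as the paper: both arguments sum over the integer partitions of $n$, fill the matrices size by size by choosing each matrix's $j\cdot m(i,j)$ elements via a binomial coefficient over the remaining automata, and then divide out the $m(i,j)!$ row permutations of each matrix. The paper compresses the double-counting step into a single phrase ("up to permutation of the rows") and additionally verifies, by telescoping the binomial coefficients, that the expression reduces to the known OEIS formula $\sum_{i=1}^{p(n)} n!/\prod_{j=1}^{d(i)} m(i,j)!$, whereas you instead make rigorous the claim that row permutation is the only source of overcounting — a more careful writing of the same argument rather than a different one.
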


\begin{proof}
	Each partition is a support to generate different partitioned orders (sum on 
	$i$), by considering all the combinations, for each matrix (product on $j$),
	of the ways to choose the $j \cdot m(i,j)$ elements of $\entiers{n}$
	it contains (binomial coefficient, chosen among the remaining elements),
	and all the ways to order them up to permutation of the rows (ratio of 
	factorials).
	Observe that developing the binomial coefficients with
	${\binom{x}{y}} = \frac{x!}{y! \cdot (x-y)!}$ gives
	\[
  	\prod_{j = 1}^{d(i)}
  		\binom{n - \sum_{k}}{j \cdot m(i,j)} \cdot(j \cdot m(i,j))!
		= \prod_{j = 1}^{d(i)}
			\frac{(n - \sum_{k})!}{(n - \sum_{k} - j \cdot m(i,j))!}
		= \frac{n!}{0!}
		= n!\text{,}
	\]
	where $\sum_{k}$ is a shorthand for $\sum_{k = 1}^{j - 1} k \cdot m(i,k)$, 
	which leads to retrieve the OEIS formula.
\end{proof}

The first ten terms are ($n = 1$ onward):
\[
	1, 3, 13, 73, 501, 4051, 37633, 394353, 4596553, 58941091\text{.}
\]

Algorithm~\ref{algo:BP} enumerates the partitioned orders of size $n$.
For each partition $i$ of $n$ (line 2, the partitions of the integer $n$ can be 
enumerated as in~\cite{T-Kelleher2005}, in order to compute $d(i)$ and 
$m(i,j)$), it enumerates all the block-parallel update modes as a list of 
matrices as presented in the introduction of this section.
The matrices are filled one-by-one (argument $j$ of the auxiliary function)
using recursion (when a matrix is ready it makes a call on $j+1$ at line 18 
or 20).
When $d(i) < j$, the block-parallel update mode is complete and enumerated 
(line 6).
When $m(i,j) = 0$, the matrix $M_j$ is empty and the recursive call is immediate 
(line 20).
Otherwise, we choose a set $A$ of $j \cdot m(i,j)$ elements of $\entiers{n}$ to 
put in matrix $M_j$, \emph{i.e.}~a combination among the 
$\binom{n}{j \cdot m(i,j)}$ possible ones (line 9).
In order to enumerate all such matrices up to permutation of the rows, we choose 
the elements of the first column (line 10), and place them in a predefined order 
(lines 11-13).
The rest of the matrix is filled in all the possible ways (lines 14-17) and a 
recursive call is performed for each possibility (line 18), in order to repeat 
the procedure on the remaining matrices with the remaining elements of 
$\entiers{n}$.

\begin{algorithm}[t!]
  {\small \caption{Enumeration of $\BPn$}
  \label{algo:BP}
  \DontPrintSemicolon
  \SetKwProg{Fn}{Function}{:}{}
  \SetKwFunction{FEnumBP}{EnumBP}
  \SetKwFunction{FEnumBPaux}{EnumBPaux}
  \SetKwComment{Comment}{\# }{}
	\bigskip  
  
  \Fn{\FEnumBP{$n$}}{
    \ForEach{$i$ from $1$ to $p(n)$}{
      \FEnumBPaux{$n$,$i$,$1$}
    }
  }\medskip

  \Fn{\FEnumBPaux{$n$, $i$, $j$, $M_1$, \dots, $M_{j-1}$}}{
    \If{$d(i)<j$}{
      \texttt{enumerate(}$M$\texttt{)}\\
      \Return
    }
    \If{$m(i, j) > 0$}{ 
      \ForEach{combination $A$ of size $j\cdot m(i, j)$ among 
      	$\entiers{n} \setminus \bigcup_{k=1}^{j-1}M_k$}{
        \ForEach{combination $C$ of size $m(i, j)$ among $A$}{
          \texttt{sort(}$C$\texttt{)}\\
          \For{$k\leftarrow 1$ \KwTo $m(i,j)$}{
            $M_j[k][1]\leftarrow C[k]$
          }
          \ForEach{permutation $B$ of $A\setminus C$}{
            \For{$k\leftarrow 1$ \KwTo $m(i,j)$}{
              \For{$\ell\leftarrow 2$ \KwTo $j$}{
                $M_j[k][\ell]\leftarrow B[(k-1)\cdot(j-1)+\ell-1]$
              }
            }
            \FEnumBPaux{$n$,$i$,$j+1$, $M_1$,\dots,$M_{j-1}$,$M_j$}
          }
        }
      }
    }
    \Else{
      \FEnumBPaux{$n$,$i$,$j+1$, $M_1$,\dots,$M_{j-1}$,$\emptyset$}
    }
  }}
\end{algorithm}

\begin{theorem}
  Algorithm~\ref{algo:BP} is correct: it enumerates $\BPn$.
\end{theorem}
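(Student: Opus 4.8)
The plan is to prove that reading each row of each matrix as an o-block turns every complete execution of the recursion (every call reaching line~6) into a partitioned order, and that the resulting map is a bijection onto $\BPn$. A tuple $(M_1, \ldots, M_{d(i)})$ thus encodes the partitioned order whose o-blocks of size $j$ are exactly the $m(i,j)$ rows of $M_j$. I would first observe that the outer loop on $i$ splits $\BPn$ according to the multiset of o-block sizes: every $\mu \in \BPn$ has such a multiset, it is an integer partition of $n$, hence equals partition $i$ for a unique $i$, and all tuples produced by \texttt{EnumBPaux}$(n,i,1)$ realise precisely this multiset. It therefore suffices to fix $i$ and show that \texttt{EnumBPaux}$(n,i,1)$ enumerates, each exactly once, all partitioned orders whose o-block size-multiset is partition $i$.

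For soundness I would check that each enumerated tuple is a genuine partitioned order: the combination $A$ at line~9 is drawn from $\entiers{n} \setminus \bigcup_{k=1}^{j-1} M_k$, so the o-blocks placed in successive matrices are pairwise disjoint, and since exactly $j \cdot m(i,j)$ elements are consumed for $M_j$ while $\sum_{j} j \cdot m(i,j) = n$ (partition $i$ sums to $n$), the o-blocks cover all of $\entiers{n}$. The core is then an induction on the recursion depth: writing $R$ for the set of elements not yet used when \texttt{EnumBPaux} is entered at level $j$, I would prove that lines~9--18 enumerate each matrix with $m(i,j)$ rows and $j$ columns whose entries lie in $R$, considered up to permutation of its rows, exactly once, after which the inductive hypothesis disposes of $M_{j+1}, \ldots, M_{d(i)}$ on the remaining elements; the branch of line~20 is the degenerate case $m(i,j)=0$, where $M_j$ is empty.

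The main obstacle, and the crux of the induction, is to show that lines~9--18 produce each row-unordered matrix exactly once rather than once per ordering of its rows, since a partitioned order is an \emph{unordered} set of o-blocks. A matrix up to row permutation is the same datum as a collection of $m(i,j)$ pairwise-disjoint ordered $j$-tuples whose union is some $A \subseteq R$. I would argue that fixing $A$ (line~9), choosing the set $C$ of first-column entries (line~10), sorting it (line~11) and filling the remaining cells by a permutation $B$ of $A \setminus C$ (lines~14--17) realises a bijection between such collections and the pairs $(C,B)$: because all entries are distinct elements of $\entiers{n}$, the first-column entries are distinct, so ordering the rows by increasing first entry yields a unique canonical representative whose first column is the sorted $C$ and whose remaining cells, read row by row, form exactly $B$; the index map $(k,\ell) \mapsto (k-1)(j-1)+\ell-1$ (lines~14--17) being a bijection onto the positions of $B$ guarantees that distinct permutations give distinct fillings and that every filling is reached. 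Distinct $A$ give collections with distinct underlying element sets and so never clash, whence lines~9--18 realise each such matrix over $R$ exactly once. Feeding this into the induction shows that, for each $i$, every partitioned order with size-multiset partition $i$ is enumerated exactly once; since distinct partitions yield distinct size-multisets, the outer loop introduces no clash across $i$, and the algorithm enumerates $\BPn$. As a confirmation, counting the leaves matrix by matrix gives $\binom{|R|}{j \cdot m(i,j)} \cdot \frac{(j \cdot m(i,j))!}{m(i,j)!}$ at level $j$, whose product over $j$ and sum over $i$ reproduces the closed form of Proposition~\ref{prop:cardBPn}.
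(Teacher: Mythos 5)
Your proof is correct, and it shares the paper's key combinatorial idea (keeping the first column of each matrix in ascending order as the canonical representative of a row-permutation class), but the overall logical route is genuinely different. The paper argues in two steps: it first counts the leaves of the recursion, obtaining
\[
\sum_{i = 1}^{p(n)} \prod_{j = 1}^{d(i)}
\binom{n - \sum_{k = 1}^{j-1} k \cdot m(i,k)}{j \cdot m(i,j)}
\binom{j \cdot m(i,j)}{m(i,j)} \bigl(m(i,j)(j-1)\bigr)!
\]
and simplifies it to the closed form of Proposition~\ref{prop:cardBPn}, i.e.\ to $|\BPn|$; it then observes that the sorted first columns prevent any partitioned order from being emitted twice, and concludes by a pigeonhole-style argument: $|\BPn|$ pairwise-distinct outputs lying in $\BPn$ must exhaust $\BPn$. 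You instead prove completeness directly: you make explicit the bijection between row-unordered matrices over a fixed support $A$ and pairs $(C,B)$ (sorted first-column set, permutation of the remaining cells), verify that the index map of lines 14--17 realises it, and run an induction on the recursion depth to conclude that every partitioned order with size multiset equal to partition $i$ is produced exactly once; your counting at the end is only a consistency check. What each approach buys: yours is self-contained (it does not lean on Proposition~\ref{prop:cardBPn}), and it makes soundness --- that every output actually is a partitioned order covering $\entiers{n}$ with disjoint o-blocks --- an explicit step, which the paper leaves implicit; the paper's version is shorter because the counting work is already done in Proposition~\ref{prop:cardBPn}, at the price of resting on that count and on the tacit facts that outputs lie in $\BPn$ and that $\BPn$ is finite.
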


\begin{proof}
	Firstly, it enumerates the correct number of update modes for $n$.
	Indeed, for each partition of $n$ and for each $j$ from $1$ to $d(i)$,
	we choose the $j \cdot m(i,j)$ elements to fill matrix $M_j$.
	Then, we choose the $m(i,j)$ elements of the first column among them, and take 
	all permutations for the remaining elements of this matrix.
	This gives the following equation:
	\begin{align*}
		|\texttt{EnumBP(}n\texttt{)}\hspace*{-1.7pt}|
			& \;=\; \sum_{i = 1}^{p(n)} \prod_{j = 1}^{d(i)}
					{\binom{n - \sum_{k = 1}^{j-1} k \cdot m(i,j)}{j \cdot m(i,j)}}
			  	\binom{m(i,j) \cdot j}{m(i, j)} (m(i,j) (j-1))!\\
			& \;=\; \sum_{i = 1}^{p(n)} \prod_{j = 1}^{d(i)}
				  {\binom{n - \sum_{k = 1}^{j-1} k \cdot m(i,j)}{j \cdot m(i,j)}} 
				  \frac{m(i,j) \cdot j!}{(m(i,j))!}\\
			& \;=\; |\BPn|\text{.}
	\end{align*}

	Secondly, since the first column of each matrix is always in ascending order, 
	the algorithm cannot enumerate two matrix representations that are identical 
	up to permutation of the rows. 
	Thus, it cannot return the same partitioned order twice, which means that it 
	is indeed enumerating every partitioned order.
\end{proof}

\subsection{Partitioned orders up to dynamical equality}
\label{s:BP0}

As for block-sequential update modes, given an AN $f$ and two block-parallel 
update modes $\mu$ and $\mu'$, the dynamics of $f$ under $\mu$ can be the same 
as that of $f$ under $\mu'$. 
To go further, in the framework of block-parallel update modes,
there exist pairs of update modes $\mu,\mu'$ such that for any AN $f$,
the dynamics $\mmjoblock{f}{\mu}$ is the exact same as $\mmjoblock{f}{\mu'}$.
As a consequence, in order to perform exhaustive searches among
the possible dynamics, it is not necessary to generate all of them.
We formalize this with the following equivalence relation.

\begin{definition}
  \label{def:dyn_eq}
  For $\mu,\mu'\in\BPn$, we denote $\mu\equiv_0\mu'$
  when $\varphi(\mu)=\varphi(\mu')$.
\end{definition}

The following Lemma shows that this equivalence relation is necessary and 
sufficient in the general case of ANs of size $n$.

\begin{lemma}
  For any $\mu, \mu' \in \BPn$, we have $\mu \equiv_0 \mu' \iff 
  \forall f: X \to X, \mmjoblock{f}{\mu} = \mmjoblock{f}{\mu'}$.
\end{lemma}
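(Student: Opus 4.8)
The plan is to treat the two directions asymmetrically, since one is essentially a matter of unfolding definitions while the other requires an explicit construction.

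For the direction $(\Longrightarrow)$, I would simply observe that, by the very definition of the block-parallel dynamics, $\mmjoblock{f}{\mu}$ is nothing but the composition $\mmjblock{f}{W_{p-1}} \circ \dots \circ \mmjblock{f}{W_0}$ associated with the sequence of blocks $\varphi(\mu) = \Wl$. Hence $\mmjoblock{f}{\mu}$ depends on $\mu$ only through $\varphi(\mu)$. So if $\mu \equiv_0 \mu'$, that is $\varphi(\mu) = \varphi(\mu')$, the two compositions are literally the same map for every $f$, and the equality $\mmjoblock{f}{\mu} = \mmjoblock{f}{\mu'}$ is immediate.

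For the direction $(\Longleftarrow)$, I would argue by contraposition: assuming $\varphi(\mu) \neq \varphi(\mu')$, I would exhibit a single AN $f$ and a single configuration on which the two dynamics disagree. Write $\varphi(\mu) = (W_0,\dots,W_{p-1})$ and $\varphi(\mu') = (W'_0,\dots,W'_{q-1})$; by Remark~\ref{remark:BP-bloc-equal-size} every block is nonempty (of size $s$, resp.~$s'$). The obstacle is that a local function has no direct access to the current substep index, so I cannot naively timestamp updates; the crux is to make such a clock emerge from the states themselves. I would take, as alphabet for each automaton, the integer sequences of length and entries bounded by $R = \max(p,q)$, and define $f_i(x)$ to append to $x_i$ the value $1 + M(x)$, where $M(x)$ is the maximum over all automata of the last entry of their sequence. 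Starting from the configuration $x^0$ where every record is empty, a straightforward induction on $\ell$ shows that, precisely because every block is nonempty, $M$ equals $\ell$ right before block number $\ell$ is applied, so that every automaton updated at substep $\ell$ appends exactly $\ell + 1$ to its record.

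Consequently, in the final configuration each automaton $i$ holds precisely the increasing list of substeps at which it was updated, i.e.~the record of $i$ encodes $\{\ell : i \in W_\ell\}$. The whole sequence of blocks is therefore recoverable from $\mmjoblock{f}{\mu}(x^0)$, since $i \in W_\ell$ iff $\ell + 1$ occurs in $i$'s record, and $p$ is one more than the largest recorded value. Applying this to both $\mu$ and $\mu'$ with the same $f$, the equality $\mmjoblock{f}{\mu}(x^0) = \mmjoblock{f}{\mu'}(x^0)$ would force the incidence sets $\{(i,\ell) : i \in W_\ell\}$ and $\{(i,\ell) : i \in W'_\ell\}$ to coincide, hence $\varphi(\mu) = \varphi(\mu')$, contradicting the hypothesis. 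The only routine point left is to make $f_i$ a total function into a finite alphabet (truncating lengths and values at $R$), which does not affect the analysed trajectory since along it all records stay within the bounds. I expect this endogenous-clock step, namely proving that $M$ tracks the substep count, to be the main technical point; everything else is bookkeeping.
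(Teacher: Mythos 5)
Your proof is correct, and its backward direction takes a genuinely different route from the paper's. The forward direction is identical in both: $\mmjoblock{f}{\mu}$ depends on $\mu$ only through $\varphi(\mu)$, so $\varphi(\mu)=\varphi(\mu')$ gives equality of the dynamics for free. For the converse, the paper argues pairwise: it introduces the first-update times $t_{\mu,i}$, claims that $\varphi(\mu)\neq\varphi(\mu')$ produces a pair $(i,j)$ ordered one way by $t_{\mu}$ and the opposite way by $t_{\mu'}$, and then separates the two modes with the two-automaton Boolean gadget $f_i(x)=x_i\lor x_j$, $f_j(x)=x_i$ evaluated from $x_i=0$, $x_j=1$. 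You instead design a single ``logging'' network whose run from the all-empty configuration writes into each automaton's state the full list of substeps at which it is updated, and you read $\varphi(\mu)$ off the image of that one configuration. Your key induction (that the clock value $M$ equals $\ell$ just before block $\ell$ is applied) is sound precisely because every block of the sequential form is nonempty (Remark~\ref{remark:BP-bloc-equal-size}), and the truncation at $R=\max(p,q)$ keeps the alphabets finite without ever being triggered along the two analysed trajectories. As for what each approach buys: the paper's gadget is minimal --- Boolean alphabets and only two non-identity local functions --- so it shows that $\equiv_0$ is necessary already within Boolean ANs, a sharper fact that your large-alphabet construction does not give. Conversely, your construction is more robust, because it tracks \emph{all} update times rather than only the first ones; this matters, since the pair-existence claim underlying the paper's argument fails in general: for $\mu=\{(0),(1,2)\}$ and $\mu'=\{(1),(0,2)\}$ all first-update times coincide ($0$, $0$ and $1$ for automata $0$, $1$, $2$ in both), yet $\varphi(\mu)=(\{0,1\},\{0,2\})\neq(\{0,1\},\{1,2\})=\varphi(\mu')$. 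Your global bookkeeping covers such pairs without modification, which the paper's argument, as written, does not.

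Two small repairs to your write-up. First, state explicitly the convention $M(x)=0$ when every record is empty, otherwise $f$ is not defined at $x^0$. Second, since blocks $0,\dots,p-1$ append the values $1,\dots,p$, the length $p$ equals the largest recorded value, not one more than it; this off-by-one is harmless, as your argument only uses the recovery of the incidence sets $\{(i,\ell) \mid i\in W_\ell\}$.
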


\begin{proof}
	Let $\mu$ and $\mu'$ be two block-parallel update modes of $\BPn$.\medskip
	
  \noindent $(\Longrightarrow)$ Let us consider that $\mu\equiv_0\mu'$, 
  and let $f : X \to X$ be an AN. 
  Then, we have $\mmjoblock{f}{\mu} = \mmjblock{f}{\varphi(\mu)} =
  \mmjblock{f}{\varphi(\mu')} = \mmjoblock{f}{\mu'}$.\medskip

  \noindent $(\Longleftarrow)$ Let us consider that 
  $\forall f : X \to X, \mmjoblock{f}{\mu} = \mmjoblock{f}{\mu'}$. 
  Let us assume for the sake of contradiction that $\varphi(\mu) \neq
  \varphi(\mu')$. 
  For ease of reading, we will denote as $t_{\mu, i}$ the substep at which
  automaton $i$ is updated for the first time with update mode $\mu$. 
  Then, there is a pair of automata $(i,j)$ such that 
  $t_{\mu, i} \leq t_{\mu, j}$, but $t_{\mu', i} > t_{\mu',j}$. 
  Let $f : \B^n \to \B^n$ be a Boolean AN such that $f(x)_i = x_i \lor
	x_j$ and $f(x)_j = x_i$, and $x \in \B^n$ such that $x_i = 0$ and $x_j = 1$. 
	We will compare $\mmjoblock{f}{\mu}(x)_i$ and $\mmjoblock{f}{\mu'}(x)_i$, in 
	order to prove a contradiction.
	Let us apply $\mmjoblock{f}{\mu}$ to $x$. 
	Before step $t_{\mu, i}$ the value of automaton $i$ is still $0$ and, most 
	importantly, since $t_{\mu, i} \leq t_{\mu,j}$, the value of $j$ is still $1$. 
	This means that right after step $t_{\mu,i}$, the value of automaton $i$ is 1, 
	and will not change afterwards. 
	Thus, we have $\mmjoblock{f}{\mu}(x)_i = 1$.
	Let us now apply $\mmjoblock{f}{\mu'}$ to $x$. 
	This time, $t_{\mu', i} > t_{\mu', j}$, which means that automaton $j$ is 
	updated first and takes the value of automaton $i$ at the time, which is $0$ 
	since it has not been updated yet. 
	Afterwards, neither automata will change value since $ 0 \lor 0$ is still $0$.
	This means that $\mmjoblock{f}{\mu'}(x)_i = 0$.
	Thus, we have $\mmjoblock{f}{\mu} \neq \mmjoblock{f}{\mu'}$, which contradicts
	our earlier hypothesis.
\end{proof}

Let $\BPn^0=\BPn/\equiv_0$ denote the corresponding quotient set, 
\emph{i.e.}~the set of block-parallel update modes to generate for computer 
analysis of all the possible dynamics in the general case of ANs of size $n$.

\begin{theorem}
  \label{theorem:BPn0}
  For any $n\geq 1$, we have:
	\begin{align}
		|\BPn^0| 
			& \;=\; 
				\sum_{i = 1}^{p(n)} 
					\frac{n!}{\prod_{j = 1}^{d(i)} \left( m(i,j)! \right)^j}
					\label{BPn0_eq1}\\
			& \;=\; 
				\sum_{i = 1}^{p(n)} 
					\prod_{j = 1}^{d(i)} 
						\prod_{\ell = 1}^{j}
							\binom{n - \sum_{k = 1}^{j - 1} 
								k \cdot m(i,k) - (\ell - 1) \cdot m(i,j)}{m(i,j)}
								\label{BPn0_eq2}\\
			& \;=\; 
				\sum_{i = 1}^{p(n)} 
					\prod_{j = 1}^{d(i)} 
						\left( 
							\binom{n - \sum_{k = 1}^{j - 1} k \cdot m(i,k)}{j \cdot m(i,j)}
							\cdot 
							\prod_{\ell = 1}^{j} 
								\binom{(j - \ell + 1) \cdot m(i,j)}{m(i,j)}
						\right)\label{BPn0_eq3}\text{.}
	\end{align}
\end{theorem}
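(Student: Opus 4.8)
The plan is to identify the quotient $\BPn^0 = \BPn/\equiv_0$ with a concrete combinatorial object read off the matrix-representation, and then to count that object in the three claimed ways. Since $\mu \equiv_0 \mu'$ means exactly $\varphi(\mu) = \varphi(\mu')$ (Definition~\ref{def:dyn_eq}), the class map $[\mu] \mapsto \varphi(\mu)$ is a bijection from $\BPn^0$ onto the image $\varphi(\BPn)$, \emph{i.e.}~onto the set of distinct sequences of blocks $(W_0,\ldots,W_{p-1})$ produced by partitioned orders. The crucial first step is therefore to pin down precisely which data of the matrix-representation survives the passage through $\varphi$.

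I would establish the following as the key lemma: for a partitioned order $\mu$ supported on partition $i$, with matrices $M_1,\ldots,M_{d(i)}$ (recall $M_j$ has $m(i,j)$ rows, each an o-block of size $j$, and $j$ columns), the sequence $\varphi(\mu)$ depends only on, and is entirely determined by, the unordered \emph{set} of entries in each column of each $M_j$. Indeed, by the definition of $\varphi$, block $W_\ell$ gathers from every o-block $S_k$ of size $n_k$ its entry at position $\ell \bmod n_k$; for the o-blocks sitting in $M_j$ this is exactly the entry in column $(\ell \bmod j)$ of the corresponding row, so the contribution of $M_j$ to $W_\ell$ is precisely the set of entries of column $(\ell \bmod j)$, and $W_\ell$ is the union of these column-sets over $j$. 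For the converse I would observe that two representations with differing column-sets must disagree in some column $c$ of some $M_j$, whence $W_c$ differs; and that the underlying partition is itself an invariant of $\varphi(\mu)$, since each element $a$'s o-block size equals its period of reappearance $p/\#\{\ell : a \in W_\ell\}$ in the block sequence. This last point guarantees that distinct partitions contribute disjointly to the sum. Getting this column-set characterisation and its converse fully airtight is the main obstacle; everything downstream is bookkeeping.

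Granting the characterisation, an element of $\BPn^0$ supported on partition $i$ is nothing but an assignment of the $n$ elements of $\entiers{n}$ to the columns, each column of $M_j$ receiving an unordered set of $m(i,j)$ elements, the sets partitioning $\entiers{n}$. To obtain \eqref{BPn0_eq1} I would start from the $n!$ fillings of the $n = \sum_j j\,m(i,j)$ labelled row-column slots and quotient by the group that permutes, independently within each column, the $m(i,j)$ row-entries. This group has order $\prod_{j} (m(i,j)!)^{j}$ and acts freely, the entries being pairwise distinct, so every orbit has exactly this size; orbit-counting then yields $n!/\prod_j (m(i,j)!)^{j}$ configurations for partition $i$, and summing the disjoint contributions over all $p(n)$ partitions gives \eqref{BPn0_eq1}.

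Finally, \eqref{BPn0_eq2} and \eqref{BPn0_eq3} count the same configurations by building the columns explicitly rather than by quotienting. For \eqref{BPn0_eq2} I would fill the columns of $M_1,\ldots,M_{d(i)}$ one at a time: on reaching column $\ell$ of $M_j$, the smaller matrices have consumed $\sum_{k=1}^{j-1} k\,m(i,k)$ elements and the earlier columns of $M_j$ a further $(\ell-1)\,m(i,j)$, leaving exactly the displayed binomial $\binom{\,\cdot\,}{m(i,j)}$ of choices, and the product over $\ell$ and $j$ is \eqref{BPn0_eq2}. For \eqref{BPn0_eq3} I would instead first choose the $j\,m(i,j)$ elements of $M_j$ and then split them into its $j$ columns, producing the extra factor $\binom{n-\sum_{k=1}^{j-1} k\,m(i,k)}{j\,m(i,j)}$ alongside $\prod_{\ell=1}^{j}\binom{(j-\ell+1)\,m(i,j)}{m(i,j)}$. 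The three summands then coincide through the elementary telescoping identity $\binom{A}{jm}\prod_{\ell=1}^{j}\binom{(j-\ell+1)m}{m} = \prod_{\ell=1}^{j}\binom{A-(\ell-1)m}{m} = \frac{A!}{(A-jm)!\,(m!)^{j}}$ with $A = n - \sum_{k=1}^{j-1} k\,m(i,k)$ and $m = m(i,j)$, which I would not grind through in detail.
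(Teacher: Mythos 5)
Your proposal follows essentially the same route as the paper's proof: both identify the $\equiv_0$-classes with matrix representations taken up to permutation of the entries within each column (equivalently, with the ``column-set'' data), and both then read the first formula as $n!$ divided by the order of the group permuting entries within columns, the second as filling the columns one at a time, and the third as choosing each matrix's content and then splitting it into columns, the three being linked by the same telescoping binomial identity. The paper establishes the three equalities algebraically and then argues the correspondence with $\equiv_0$-classes; you do the same things in the opposite order, which is an immaterial difference.

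However, one step of your key lemma is false as stated, though it is repairable with an idea you already use. You claim that two representations with differing column-sets ``must disagree in some column $c$ of some $M_j$, whence $W_c$ differs.'' The conclusion does not follow, because $W_c$ is the union of one column from each matrix, and a change in the column of $M_j$ can be compensated by a change in the column of another matrix. Concretely, for $n=3$ and the partition $1+2$, take $\mu = \{(0),(1,2)\}$ and $\mu' = \{(1),(0,2)\}$: their column-sets disagree in the first column of $M_1$ (namely $\{0\}$ versus $\{1\}$) and in the first column of $M_2$ ($\{1\}$ versus $\{0\}$), yet $\varphi(\mu) = (\{0,1\},\{0,2\})$ and $\varphi(\mu') = (\{0,1\},\{1,2\})$ agree precisely on that first block; they differ only in the second. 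The correct argument is the one you already sketch for recovering the partition, pushed one step further: if an element $a$ lies in an o-block of size $j$ and in column $c$ of $M_j$, then $\{\ell \mid a \in W_\ell\} = \{\ell \mid \ell \equiv c \bmod j\}$, so the sequence $\varphi(\mu)$ determines, for every element, both its o-block size (via its number of occurrences, as you say) and its column index (for instance as $\min\{\ell \mid a \in W_\ell\}$), hence it determines the entire column-set data. With that substitution the injectivity direction is airtight, and your proof is complete and matches the paper's; note that the paper itself is even terser on this converse direction than you are.
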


\begin{proof}
	$|\BPn^0|$ can be viewed as three distinct formulas. 
	Let us begin this proof by showing that these formulas are equal, then we 
	will show that they effectively count $|\BPn^0|$.
	
  Formula~\ref{BPn0_eq1} is a sum for each partition of $n$ (sum 
  on $i$), of all the ways to fill all the matrices ($n!$) up to permutation
  within each column ($m(i,j)!$ for each of the $j$ columns of $M_j$).

  Formula~\ref{BPn0_eq2} is a sum for each partition of $n$ (sum on $i$),
  of the product for each column of the matrices (products on $j$ and $\ell$),
  of the choice of elements (among the remaining ones) to fill the column
  (regardless of their order within the column).

  Formula~\ref{BPn0_eq3} is a sum for each partition of $n$ (sum on $i$),
  of the product for each matrix (product on $j$), of the choice of elements
  (among the remaining ones) to fill this matrix, multiplied by the number of 
  ways to fill the columns of the matrix (product on $\ell$) with these elements
  (regardless of their order within each column).

  The equality between Formulas~\ref{BPn0_eq1} and~\ref{BPn0_eq2} is obtained 
  by developing the binomial coefficients as follows: 
  ${\binom{x}{y}} = \frac{x!}{y!\cdot(x-y)!}$, 
  and by observing that the products of $\frac{x!}{(x-y)!}$ telescope.
  Indeed, denoting 
  $a(j,\ell) = (n - \sum_{k = 1}^{j-1} k \cdot m(i,k) - \ell \cdot m(i,j))!$,
  we have
  \[
    \prod_{j = 1}^{d(i)} 
    	\prod_{\ell = 1}^{j}
				\frac
					{(n - \sum_{k = 1}^{j - 1} k \cdot m(i,k) - (\ell - 1) \cdot m(i,j))!}
					{(n - \sum_{k = 1}^{j - 1} k \cdot m(i,k) - \ell \cdot m(i,j))!}
    \;=\;
    \prod_{j = 1}^{d(i)} 
    	\prod_{\ell = 1}^{j}
    		\frac
					{a(j,\ell-1)}
					{a(j,\ell)}
    \;=\;
    \frac{n!}{0!}
    \;=\;
    n!
  \]
  because $a(1,0) = n!$, then $a(1,j) = a(2,0)$, $a(2,j) = a(3,0)$, ..., until 
  $a(d(i),j) = 0!$.\medskip

  The equality between Formulas~\ref{BPn0_eq2} and~\ref{BPn0_eq3} is obtained by 
  repeated uses of the identity
  ${\binom{x}{z}}{\binom{x-z}{y}} = {\binom{x}{z+y}}{\binom{z+y}{y}}$, 
  which gives by induction on $j$:
  \begin{align}
  	\label{eq:binom23}
    \prod_{\ell = 1}^{j} 
    	\binom{x - (\ell - 1) \cdot y}{y}
    \;=\;
    \binom{x}{j \cdot y} \cdot \prod_{\ell = 1}^{j} 
    	\binom{(j - \ell + 1) \cdot y}{y}\text{.}
  \end{align}
  Indeed, $j=1$ is trivial and, using the induction hypothesis on $j$ then the 
  identity we get:
  \begin{align*}
    \prod_{\ell = 1}^{j+1} 
    	\binom{x - (\ell - 1) \cdot y}{y}
    & \;=\; 
    	\binom{x - j \cdot y}{y} \cdot \prod_{\ell = 1}^{j} 
	    	\binom{x - (\ell - 1) \cdot y}{y}\\
    & \;=\; 
    	\binom{x - j \cdot y}{y} \cdot \binom{x}{j \cdot y} 
    	\cdot \prod_{\ell = 1}^{j} 
    		\binom{(j - \ell + 1) \cdot y}{y}\\
    & \;=\;
    	\binom{x}{(j + 1) \cdot y} \cdot \binom{(j + 1) \cdot y}{y} 
    	\cdot \prod_{\ell = 1}^{j} 
    		\binom{(j - \ell + 1) \cdot y}{y}\\
    & \;=\;
	    \binom{x}{(j + 1) \cdot y} \cdot 
	    \prod_{\ell = 0}^{j} 
	    	\binom{(j - \ell + 1) \cdot y}{y}\\
    & \;=\;
    	\binom{x}{(j + 1) \cdot y} \cdot \prod_{\ell = 1}^{j + 1} 
    		\binom{(j + 1 - \ell + 1) \cdot y}{y}\text{.}
  \end{align*}
  As a result, Formula~\ref{BPn0_eq3} is obtained from Formula~\ref{BPn0_eq2} 
  by applying Equation~\ref{eq:binom23} for each $j$ with 
  $x = n - \sum_{k = 1}^{j-1} k \cdot m(i,k)$ and $y = m(i,j)$.\medskip

  To prove that they count the number of elements in $\BPn^0$, we now argue that 
  for any pair $\mu, \mu' \in \BPn$, we have $\mu \equiv_0 \mu'$ if and only if
  their matrix-representations are the same up to a permutation of the elements 
  within columns (the number of equivalence classes is then counted by Formula~\ref{BPn0_eq1}).
  In the definition of $\varphi$, each block is a set constructed by taking one 
  element from each o-block. 
  Given that $n_k$ in the definition of $\varphi$ corresponds to $j$ in the 
  statement of the theorem, one matrix corresponds to all the o-blocks that have 
  the same size $n_k$.
  Hence, the $\ell \mod n_k$ operations in the definition of $\varphi$
  amounts to considering the elements of these o-blocks which are in the 
  same column in their matrix representation.
  Since blocks are unordered, the result follows.
\end{proof}

The first ten terms of the sequence $(|\BPn^0|)_{n\geq 1}$ are:
\[
	1, 3, 13, 67, 471, 3591, 33573, 329043, 3919387, 47827093\text{.}
\]
They match the sequence A182666 of OEIS~\cite{oeisA182666},
and the next lemma proves that they are indeed the same sequence
(defined by its exponential generating function on OEIS).
The \emph{exponential generating function of a sequence}
$\left(a_n\right)_{n\in\N}$ is $f(x) = \sum_{n\geq 0}a_n\frac{x^n}{n!}$.

\begin{lemma}
  The exponential generating function of $(|\BPn^0|)_{n \in \N}$ is
  $\prod_{j \geq 1} \sum_{k \geq 0} \left( \frac{x^k}{k!} \right)^j$.
\end{lemma}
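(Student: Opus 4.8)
The plan is to start from the closed form in Formula~\ref{BPn0_eq1} of Theorem~\ref{theorem:BPn0}, namely $|\BPn^0| = \sum_{i=1}^{p(n)} n!/\prod_{j=1}^{d(i)}(m(i,j)!)^j$, and to recognize the weight attached to each partition as the contribution of an infinite product of power series indexed by the part sizes. Dividing by $n!$, the quantity $|\BPn^0|/n!$ becomes a sum over integer partitions of $n$ of the weight $\prod_{j}1/(m_j!)^j$, where $m_j$ denotes the multiplicity of the part $j$ (so that $\sum_{j\geq 1} j\cdot m_j = n$). This is precisely the coefficient of $x^n$ that the claimed generating function should produce.

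First I would reindex the sum over partitions of $n$ as a sum over all sequences of multiplicities $(m_1,m_2,\dots)$ of nonnegative integers with finite support, subject to the single constraint $\sum_{j\geq 1} j\cdot m_j = n$. Using $x^n = \prod_{j\geq 1} x^{j\cdot m_j}$ on that constraint, the exponential generating function $F(x) = \sum_{n\geq 0}|\BPn^0|\frac{x^n}{n!}$ rewrites as one sum, over all finitely supported multiplicity sequences, of $\prod_{j\geq 1}\frac{x^{j\cdot m_j}}{(m_j!)^j}$, the constraint now being absorbed into the exponent of $x$ so that the summation is unconstrained.

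The key step is then to factorize this multi-indexed sum into an independent product over part sizes $j$. Since the summand is a product over $j$ and the indices $(m_j)_j$ range freely and independently, Fubini for formal power series gives $F(x) = \prod_{j\geq 1}\bigl(\sum_{k\geq 0}\frac{x^{jk}}{(k!)^j}\bigr)$. A final rewriting $\frac{x^{jk}}{(k!)^j} = \bigl(\frac{x^k}{k!}\bigr)^j$ yields exactly $\prod_{j\geq 1}\sum_{k\geq 0}\bigl(\frac{x^k}{k!}\bigr)^j$, which is the stated generating function.

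I expect the only delicate point to be justifying the interchange of summation with the infinite product at the level of formal power series. This is routine once one observes that, for any fixed total degree $n$, only the factors with $j\leq n$ contribute, and within each such factor only the terms with $k\leq n/j$ reach degree $n$; hence each coefficient of $F(x)$ is determined by a finite computation and the infinite product is well defined. I would state this finiteness explicitly so that the factorization is rigorous rather than merely formal, and then read off that the coefficient of $x^n$ in the product equals $\sum_{\lambda\vdash n}\prod_j 1/(m_j!)^j = |\BPn^0|/n!$, concluding the proof.
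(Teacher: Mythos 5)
Your proposal is correct and follows essentially the same route as the paper: both rest on identifying the coefficient of $x^n$ in $\prod_{j\geq 1}\sum_{k\geq 0}(x^k/k!)^j$ with the sum over finitely supported multiplicity sequences satisfying $\sum_j j\cdot m_j = n$ (\emph{i.e.}~partitions of $n$), and matching it against Formula~\ref{BPn0_eq1} of Theorem~\ref{theorem:BPn0}; the paper merely runs the computation in the opposite direction (expanding the product and reading off the coefficient, rather than building the product from the coefficients). Your explicit remark that each coefficient involves only factors $j\leq n$ and terms $k\leq n/j$ makes rigorous a point the paper leaves implicit, but it is a refinement of the same argument, not a different one.
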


\begin{proof}
	We will start from the exponential generating function by finding the
	coefficient of $x^n$ and proving that it is equal to $\frac{|\BPn^0|}{n!}$, 
	and thus that the associated sequence is $(|\BPn^0|)_{n\in\N}$.

	\begin{multline*}
		\prod_{j \geq 1} 
			\sum_{k \geq 0} 
				\left( 
					\frac{x^k}{k!} 
				\right)^j
		\;=\;
			\left( 
				\sum_{k \geq 0} 
					\frac{x^k}{k!} 
			\right) 
			\times 
			\left( 
				\sum_{k \geq 0} 
					\frac{x^{2k}}{(k!)^2} 
			\right) 
			\times 
			\left( 
				\sum_{k \geq 0} 
					\frac{x^{3k}}{(k!)^3} 
			\right) 
			\times  
			\cdots \\
		\;=\; 
			\underbrace{
				\left( 1 + x + \frac{x^2}{2!} + \cdots \right)
			}_{j = 1} 
			\times
			\underbrace{
				\left( 1 + x^2 + \frac{x^4}{(2!)^2} + \cdots \right)
			}_{j = 2}
			\times
			\underbrace{
				\left( 1 + x^3 + \frac{x^6}{(2!)^3} + \cdots \right)
			}_{j = 3}
			\times
			\cdots
			\text{.}
	\end{multline*}
	Each term of the distributed sum is obtained by associating a $k \in \N$ to 
	each $j \in \N_+$, and by doing the product of the $\frac{1}{(k!)^j} \cdot 
	x^{jk}$.
	Thus, if $\N^{\N_+}$ is the set of maps from $\N_+$ to  $\N$, we have:
	\[
		\prod_{j \geq 1} 
			\sum_{k \geq 0} \left( 
				\frac{x^k}{k!} 
			\right)^j 
		\;=\;
			\sum_{m \in \N^{\N_+}} 
				\left( 
					\prod_{j \geq 1} 
						\frac{1}{(m(j)!)^j} 
				\right)
				\cdot 
				x^{\sum_{j \geq 1} j \cdot m(j)}\text{.}
	\]
	From here, to get the coefficient of $x^n$, we need to do the sum only on the
	maps $m$ such that $\sum_{j\geq1}j\cdot m(j) = n$, which just so happen to be
	the partitions of $n$, with $m(j)$ being the multiplicity of $j$ in the
	partition.
	Thus, the coefficient of $x^n$ is
	\[
  	\sum_{i = 1}^{p(n)} 
  		\prod_{j \geq 1} 
  			\frac{1}{(m(i,j)!)^j} 
	  \;=\; 
	  	\sum_{i = 1}^{p(n)} 
	  		\frac{1}{\prod_{j \geq 1}^{d(i)} (m(i,j)!)^j} 
	  \;=\; 
	  	\frac{|\BPn^0|}{n!}\text{.}
	\]
\end{proof}

Algorithm~\ref{algo:BP0} enumerates the elements of $\BPn^0$. It starts out
like Algorithm~\ref{algo:BP}, but differs from line 10 and onwards, after the
contents of $M_j$ are chosen. In the previous algorithm, we needed to enumerate
every matrix up to permutation of the rows. This time, we need to enumerate
every matrix up to permutation within the columns. This means that we only
need to choose the content of each column, not the order. This is performed
using two other functions to enumerate the possible contents for each column
recursively, just like we do for that of the matrices.
The first function (lines 14-20) works in an analogous way to function
\texttt{EnumBPeq0}, minus the enumeration of partitions ; it is mostly there to
reshape the columns given by the auxiliary function as a matrix.
The auxiliary function either returns the columns if every element of the
matrix has been placed into one of them (lines 22-24), or enumerates the 
possible ways to fill the current column, and calls itself recursively to fill 
the next one in all possible ways as well (lines 25-27).

\begin{algorithm}[t!]
  {\small \caption{Enumeration of $\BPn^0$}
  \label{algo:BP0}
  \DontPrintSemicolon
  \SetKwProg{Fn}{Function}{:}{}
  \SetKwFunction{FEnumBP}{EnumBPeq0}
  \SetKwFunction{FEnumBPaux}{EnumBPeq0aux}
  \SetKwFunction{FEnumBlock}{EnumBlockeq0}
  \SetKwFunction{FEnumBlockAux}{EnumBlockeq0Aux}
  \SetKwComment{Comment}{\# }{}
	\bigskip  
  
  \Fn{\FEnumBP{$n$}}{
    \ForEach{$i \in \emph{partitions}(n)$}{
      \FEnumBPaux{$n$,$i$,$1$}
    }
  }\medskip

  \Fn{\FEnumBPaux{$n$, $i$, $j$, $M_1$, \dots, $M_{j-1}$}}{
    \If{$d(i)<j$}{
      \texttt{enumerate(}$M$\texttt{)}\;
      \Return
    }
    \If{$m(i, j) > 0$}{ 
      \ForEach{combination $A$ of size $j\cdot m(i, j)$ among
				$\entiers{n}\setminus \bigcup_{k=1}^{j-1}M_k$}{
        \ForEach{$M_j$ enumerated by \FEnumBlock{$A$, $j$, $m(i,j)$}}{
          \FEnumBPaux{$n$, $i$, $j+1$, $M_1$, \dots, $M_{j-1}$, $M_j$}
        }
      }
    }
    \Else{
      \FEnumBPaux{$n$, $i$, $j+1$, $M_1$, \dots, $M_{j-1}$, $\emptyset$}
    }
  }\medskip

  \Fn{\FEnumBlock{$A$, $j$, $m$}}{
    \ForEach{$C$ enumerated by \FEnumBlockAux{$A$, $m$}}{
      \For{$k\leftarrow 1$ \KwTo $m$}{
        \For{$\ell\leftarrow 1$ \KwTo $j$}{
          $M_j[k][\ell] = C_\ell[k]$\;

        }
      }
      \texttt{enumerate(}$M_j$\texttt{)}\;
      \Return
    }
  }\medskip

  \Fn{\FEnumBlockAux{$A$, $m$, $C_1$, \dots, $C_\ell$}}{
    \If{$A = \emptyset$}{
      \texttt{enumerate(}$C$\texttt{)}\;
      \Return
    }
    \Else{
      \ForEach{combination $B$ of size $m$ among $A$}{
        \FEnumBlockAux{$A\setminus B$, $m$, $C_1$, \dots, $C_\ell$, $B$}\;
      }
    }
  }}
\end{algorithm}

\begin{theorem}
  Algorithm~\ref{algo:BP0} is correct: it enumerates $\BPn^0$.
\end{theorem}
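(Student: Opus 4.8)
The plan is to follow the same two-stage template as the correctness proof of Algorithm~\ref{algo:BP}: first show that Algorithm~\ref{algo:BP0} emits exactly $|\BPn^0|$ objects, and then show that it never emits two representatives of the same $\equiv_0$-class; together these force a bijection between emission events and equivalence classes. The pivotal ingredient is the characterization obtained inside the proof of Theorem~\ref{theorem:BPn0}: for $\mu,\mu'\in\BPn$ one has $\mu\equiv_0\mu'$ if and only if their matrix-representations agree up to a permutation of the elements within each column. Consequently an $\equiv_0$-class is canonically encoded by a partition $i$ of $n$ together with, for every matrix $M_j$, the ordered tuple of its $j$ column-sets $(C_1,\dots,C_j)$, each $C_\ell$ being an (unordered) set of $m(i,j)$ automata; the position of a column is significant but the arrangement of elements inside a column is not.

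First I would verify the count by reading it off the recursion. The outer loop of \texttt{EnumBPeq0} ranges over the $p(n)$ partitions (the sum on $i$). Inside \texttt{EnumBPeq0aux}, filling matrix $M_j$ begins by choosing its $j\cdot m(i,j)$ elements among those not yet used, contributing the factor $\binom{n-\sum_{k=1}^{j-1}k\cdot m(i,k)}{j\cdot m(i,j)}$. Then \texttt{EnumBlockeq0}, through \texttt{EnumBlockeq0Aux}, selects the $j$ columns one after another as disjoint combinations of size $m(i,j)$ drawn from $A$, contributing $\prod_{\ell=1}^{j}\binom{(j-\ell+1)\cdot m(i,j)}{m(i,j)}$. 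Taking the product over $j$ and the sum over $i$ reproduces Formula~\ref{BPn0_eq3} exactly, so the number of emitted objects equals $|\BPn^0|$.

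Second I would argue distinctness. Within one matrix, \texttt{EnumBlockeq0Aux} enumerates each ordered tuple of disjoint column-sets $(C_1,\dots,C_j)$ exactly once, since it picks $C_1$ as a combination, removes it, picks $C_2$ from the remainder, and so on; distinct tuples arise from distinct choice sequences and a given tuple from only one. Likewise the combinations $A$ distributing $\entiers{n}$ among the matrices are enumerated without repetition. Because each column is stored as a combination (its internal order discarded) while column positions are preserved, every emitted object is already a canonical representative in the sense above, and two distinct emissions differ either in their partition $i$, in some matrix's element-set, or in some column-set at a fixed position. By the characterization recalled above, such objects lie in different $\equiv_0$-classes, so no class is produced twice. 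Combining the two stages, an injection from a set of cardinality $|\BPn^0|$ (the count) into the $|\BPn^0|$ classes must be a bijection, hence every class is hit exactly once and the algorithm enumerates $\BPn^0$.

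The main obstacle, and the place deserving the most care, is the distinctness step: one must be sure that treating each column as an unordered combination while keeping the columns ordered yields precisely the canonical form of Theorem~\ref{theorem:BPn0} --- neither collapsing two genuinely different classes (which would require permuting whole columns, something $\equiv_0$ does not permit) nor splitting one class across several emissions (which the within-column combination choice prevents). Verifying this alignment between the algorithm's column-by-column enumeration and the ``up to permutation within columns'' equivalence is the crux; the counting argument is then routine bookkeeping mirroring Proposition~\ref{prop:cardBPn}.
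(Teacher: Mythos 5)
Your proposal is correct and follows essentially the same two-stage argument as the paper's own proof: match the number of emitted objects against Formula~\ref{BPn0_eq3} of Theorem~\ref{theorem:BPn0}, then use the characterization of $\equiv_0$ (matrix representations equal up to permutation of elements within columns) to conclude that no two emitted objects are equivalent, forcing a bijection with the classes of $\BPn^0$. Your handling of the distinctness step is merely a more explicit version of the paper's observation that the algorithm fixes only the content of each column, never an order within it.
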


\begin{proof}
	Firstly, it enumerates the correct number of update modes for $n$.
	Indeed, for each partition of $n$ and for each $j$ from $1$ to $d(i)$,
	we first choose the $j\cdot m(i,j)$ elements to fill matrix $M_j$. 
	We then choose, for each $\ell$ from $1$ to $j$, the $m(i, j)$ elements of the
	$\ell$-th column of $M_j$. Thus, the algorithm enumerates the following number
	of update modes:
	\[
		\sum_{i = 1}^{p(n)}
			\prod_{j = 1}^{d(i)}
				\left( 
					\binom{n - \sum_{k = 1}^{j-1} k \cdot m(i,k)}
					{j \cdot m(i,j)} 
					\cdot
					\prod_{\ell = 1}^{j} 
						\binom{(j - \ell + 1) \cdot m(i,j)}{m(i,j)}
				\right)\text{,}
	\] 
	which is exactly the size of $\BPn^0$ (Theorem~\ref{theorem:BPn0}).

	Secondly, the algorithm only chooses the content of each matrix column, not 
	the order. 
	This means that two update modes enumerated by this algorithm cannot have the 
	exact same content of each column of each of their matrices, and thus cannot 
	be equivalent by Definition~\ref{def:dyn_eq}.
	This means that Algorithm~\ref{algo:BP0} enumerates every block-parallel
	update mode up to dynamical equality.
\end{proof}

\subsection{Partitioned orders up to dynamical isomorphism on the limit set}
\label{s:BPstar}

The following equivalence relation defined over block-parallel update modes
turns out to capture exactly the notion of having isomorphic limit dynamics.
It is analogous to $\equiv_0$, except that a circular shift of order $i$
may be applied on the sequences of blocks.

\begin{definition}
\label{def:dyn_iso}
  For $\mu, \mu' \in \BPn$, we denote $\mu \equiv_\star \mu'$ when 
  $\varphi(\mu) = \sigma^i(\varphi(\mu'))$ for some $i \in 
  \entiers{|\varphi(\mu')|}$ called the \emph{shift}.
\end{definition}

\begin{remark}\normalfont
  Note that $\mu \equiv_0 \mu'$ corresponds to the particular case $i = 0$ of 
  $\equiv_\star$.
  Thus, $\mu \equiv_0 \mu' \implies \mu \equiv_\star \mu'$.
\end{remark}

\begin{notation}
  Given $\mmjoblock{f}{\mu} : X \to X$, let
	$\Omega_{\mmjoblock{f}{\mu}} = \bigcap_{t \in \N} \mmjoblock{f}{\mu}(X)$
  denote its \emph{limit set} (abusing the notation of $\mmjoblock{f}{\mu}$ to 
  sets of configurations), 
  and $\mmjoblocklimit{f}{\mu} : \Omega_{\mmjoblock{f}{\mu}} \to
  \Omega_{\mmjoblock{f}{\mu}}$ its restriction to its limit set.
  Observe that, since the dynamics is deterministic, $\mmjoblocklimit{f}{\mu}$ 
  is bijective.
\end{notation}

The following Lemma shows that, if one is generally interested in the limit 
behaviour of ANs under block-parallel updates, then studying a representative 
from each equivalence class of the relation $\equiv_\star$ is necessary and 
sufficient to get the full spectrum of possible limit dynamics.

\begin{lemma}
  For any $\mu, \mu' \in \BPn$, we have $\mu \equiv_\star \mu' \iff
  \forall f : X \to X, \mmjoblocklimit{f}{\mu} \sim \mmjoblocklimit{f}{\mu'}$.
\end{lemma}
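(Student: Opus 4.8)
The plan is to first reduce the statement to a combinatorial condition on limit cycles, then treat the two implications separately: the forward one by an explicit conjugacy, the backward one (the crux) by constructing a separating network. The governing observation is that each $\mmjoblocklimit{f}{\mu}$ is a bijection of the finite set $\Omega_{\mmjoblock{f}{\mu}}$, so its functional graph is a disjoint union of directed cycles; hence $\mmjoblocklimit{f}{\mu} \sim \mmjoblocklimit{f}{\mu'}$ holds if and only if the two systems have the same multiset of limit-cycle lengths. I would record this equivalence at the outset, since it shapes both directions.

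\textbf{Forward direction.} Assume $\mu \equiv_\star \mu'$ and write $\varphi(\mu) = (W_0, \ldots, W_{p-1})$, with $\varphi(\mu')$ equal to its rotation starting at some index $a$. I split the one-step map at the rotation point: set $A = \mmjblock{f}{W_{a-1}} \circ \cdots \circ \mmjblock{f}{W_0}$ and $B = \mmjblock{f}{W_{p-1}} \circ \cdots \circ \mmjblock{f}{W_a}$, so that $\mmjoblock{f}{\mu} = B \circ A$ and $\mmjoblock{f}{\mu'} = A \circ B$. By associativity $A \circ \mmjoblock{f}{\mu} = A \circ B \circ A = \mmjoblock{f}{\mu'} \circ A$, and iterating this intertwining gives $A(\Omega_{\mmjoblock{f}{\mu}}) \subseteq \Omega_{\mmjoblock{f}{\mu'}}$ and, symmetrically, $B(\Omega_{\mmjoblock{f}{\mu'}}) \subseteq \Omega_{\mmjoblock{f}{\mu}}$. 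Since $B \circ A = \mmjoblock{f}{\mu}$ restricts to the bijection $\mmjoblocklimit{f}{\mu}$ on $\Omega_{\mmjoblock{f}{\mu}}$, the map $A$ is injective there; and each $y' \in \Omega_{\mmjoblock{f}{\mu'}}$ is a value $y' = A(B(z'))$ of the bijection $\mmjoblocklimit{f}{\mu'} = A \circ B$ with $B(z') \in \Omega_{\mmjoblock{f}{\mu}}$, giving surjectivity. Thus $A$ restricts to a bijection $\Omega_{\mmjoblock{f}{\mu}} \to \Omega_{\mmjoblock{f}{\mu'}}$ conjugating $\mmjoblocklimit{f}{\mu}$ to $\mmjoblocklimit{f}{\mu'}$, hence the graph isomorphism holds for every $f$. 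I expect this direction to be routine.

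\textbf{Backward direction (contrapositive).} Assume $\mu \not\equiv_\star \mu'$; I must exhibit a single $f$ whose two limit-cycle multisets differ. Here I would exploit Remark~\ref{remark:BP-bloc-equal-size}: the blocks of $\varphi(\mu)$ are pairwise distinct, so the cyclic sequence of blocks is determined up to rotation by its set of blocks together with the cyclic successor relation; equivalently, each update-time set $U_i = \{\ell \in \entiers{p} : i \in W_\ell\}$ is a single residue class modulo the length of $i$'s o-block, and $\mu \equiv_\star \mu'$ amounts to the existence of \emph{one common shift} aligning all the $U_i$ simultaneously. The plan is to realise a complete rotation-invariant of this labelled data inside limit-cycle lengths, combining two ingredients: independent modular counters $f_i(x) = x_i + 1$ on alphabets $\Z_p$, whose periods expose each $|U_i|$ and hence the o-block-size multiset; and order-detecting Boolean gadgets in the spirit of the lemma characterising $\equiv_0$, which compare first-update times of pairs of automata, rendered insensitive to a global shift by testing only relative cyclic order.

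The main obstacle is precisely the engineering of these gadgets so that the resulting cycle-length multisets coincide \emph{exactly} when some global shift aligns the two schedules, and differ otherwise. The size multiset alone is too coarse (it ignores phases and labels), while the fixed alignment used in the $\equiv_0$ separation is too rigid; quotienting by the global shift while still forcing distinct limit-cycle lengths whenever no shift aligns the two necklaces is what makes the separating network substantially more delicate than in the previous lemma, and I expect this to be where the real work lies.
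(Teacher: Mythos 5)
Your forward direction is correct and is essentially the paper's own argument: the paper's isomorphism $\pi = \mmjblock{f}{W_0, \ldots, W_{\ispe-1}}$ is exactly your map $A$, used with the same intertwining relation $\mmjoblock{f}{\mu'} \circ \pi = \pi \circ \mmjoblock{f}{\mu}$; your justification that $A$ restricts to a bijection between the two limit sets is, if anything, more explicit than the paper's. The opening reduction of graph isomorphism to equality of the multisets of limit-cycle lengths is also sound, since both restricted maps are permutations of finite sets.

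The backward direction, however, has a genuine gap, and it sits exactly where you write that you ``expect this to be where the real work lies.'' Your modular counters settle only the case where some automaton is updated a different number of times in $\mu$ and in $\mu'$ (the paper's case (1)); when all update multiplicities agree but no single shift aligns the two schedules, you appeal to hypothetical ``order-detecting gadgets'' realizing a complete rotation-invariant of the labelled schedule, but you never construct them --- and that construction is the substance of the lemma. The paper's proof shows that no such universal invariant is needed: it performs a case disjunction on \emph{how} $\mu'$ fails to be a shift of $\mu$, each failure mode being itself a shift-invariant property and hence separable by a small dedicated network. In case (2.1), some pair $\ispe, \jspe$ is updated in a common substep of $\varphi(\mu)$ but in no common substep of $\varphi(\mu')$; since ``being synchronised at least once per period'' is invariant under rotation, no further quotienting is required, and a two-automaton gadget with states in $\B \times \entiers{\frac{p}{|S|}}$ (a bit that can be swapped only when a phase counter equals zero) yields a limit cycle of length $2$ under $\mu$ but only fixed points under $\mu'$. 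In case (2.2.1), the synchronisation classes (matrix columns) agree but three columns occur in different relative cyclic orders, and a Boolean three-automaton copying cycle separates the two limit sets --- again a rotation-invariant distinction. In case (2.2.2), all matrices are permuted circularly; then either the two block sequences share a block, in which case (blocks being pairwise distinct, Remark~\ref{remark:BP-bloc-equal-size}) they share all blocks and $\mu \equiv_\star \mu'$, contrary to hypothesis, or they share none, which produces a pair as in (2.1). Without this disjunction, or an actual construction of the gadgets you postulate, the hard half of the equivalence remains unproved; the lesson of the paper's argument is that one needs not a single network encoding a complete necklace invariant, but only one network per shift-invariant way of differing.
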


\begin{proof}
	Let $\mu$ and $\mu'$ be two block-parallel update modes of $\BPn$.\medskip
	
  \noindent $(\Longrightarrow)$ Let $\mu,\mu'$ be such that $\mu \equiv_\star 
  \mu'$ of shift $\ispe \in \entiers{p}$, with $\varphi(\mu) = \Wl$, 
  $\varphi(\mu') = (W'_\ell)_{\ell \in \entiers{p}}$ and $p = |\varphi(\mu)| = 
  |\varphi(\mu')|$.
  It means that $\forall i \in \entiers{p}$, we have $W'_i = 
  W_{i + \ispe \mod p}$, and for any AN $f$, we deduce that 
  $\pi = \mmjblock{f}{W_0, \ldots, W_{\ispe-1}}$ is the desired isomorphism from 
  $\Omega_{\mmjoblock{f}{\mu}}$ to $\Omega_{\mmjoblock{f}{\mu'}}$.
  Indeed, we have $\mmjoblock{f}{\mu}(x) = y$ if and only if 
  $\mmjoblock{f}{\mu'}(\pi(x)) = \pi(y)$ because
  \[
    \mmjoblock{f}{\mu'} \circ \pi
    =
    \mmjblock{f}{W_0, \ldots, W_{\ispe-1}, W'_0, \ldots, W'_p}
    =
    \mmjblock{f}{W'_{p-\ispe}, \ldots, W'_p} \circ \mmjoblock{f}{\mu}
    =
    \pi \circ \mmjoblock{f}{\mu}\text{.}
  \]
  Note that $\pi^{-1} = \mmjoblock{f}{\mu'}^{(q-1)} \circ 
  \mmjblock{f}{W'_{\ispe} \ldots W'_{p-1}}$ with $q$ the least common multiple 
  of the limit cycle lengths, and $\pi^{-1} \circ \pi$ (\emph{resp.} 
  $\pi \circ\pi^{-1}$)  is the identity on $\Omega_{\mmjoblock{f}{\mu}}$
  (\emph{resp.} $\Omega_{\mmjoblock{f}{\mu'}}$).\medskip

  \noindent $(\Longleftarrow)$ We prove the contrapositive, from $\mu \not 
  \equiv_\star \mu'$, by case disjunction.\smallskip
  
  \begin{enumerate}
  \item[(1)] If in $\varphi(\mu)$ and $\varphi(\mu')$, there is an automaton 
	  $\ispe$ which is not updated the same number of times $\alpha$ and $\alpha'$ 
  	in $\mu$ and $\mu'$ respectively, then we assume without loss of generality 
	  that $\alpha > \alpha'$ and consider the AN $f$ such that:
  	\begin{itemize}
    \item $X_{\ispe} = \entiers{\alpha}$ and $X_i = \{0\}$ for all 
    	$i \neq \ispe$; and
    \item $f_{\ispe}(x) = (x_{\ispe} + 1) \mod \alpha$ and $f_i(x) = x_i$ for 
    	all $i \neq \ispe$.
	  \end{itemize}
  	It follows that $\mmjoblocklimit{f}{\mu}$ has only fixed points since 
  	$+1 \mod \alpha$ is applied $\alpha$ times, whereas 
  	$\mmjoblocklimit{f}{\mu'}$ has no fixed point because $\alpha' < \alpha$.
	  We conclude that $\mmjoblocklimit{f}{\mu} \not \sim 
	  \mmjoblocklimit{f}{\mu'}$.\smallskip
	  
	\item[(2)] If in $\varphi(\mu)$ and $\varphi(\mu')$, all the automata are 
		updated the same number of times, then the transformation from $\mu$ to 
		$\mu'$ is a permutation on $\entiers{n}$ which preserves the matrices of 
		their matrix representations (meaning that any $i \in \entiers{n}$ is in an 
		o-block of the same size in $\mu$ and $\mu'$, which also implies that $\mu$ 
		and $\mu'$ are constructed from the same partition of $n$). 
		Then we consider subcases.
		
		\begin{enumerate}
		\item[(2.1)] If one matrix of $\mu'$ is not obtained by a permutation of the 
			columns from $\mu$, then there is a pair of automata $\ispe, \jspe$
			that appears in the $k$-th block of $\varphi(\mu)$ for some $k$,
			and does not appear in any block of $\varphi(\mu')$.
			Indeed, one can take $\ispe, \jspe$ to be in the same column in $\mu$ but 
			in different columns in $\mu'$.
			Let $S$ be the o-block of $\ispe$ and $S'$ be the o-block of $\jspe$.
			Let $p$ denote the least common multiple of o-blocks sizes in both $\mu$ 
			and $\mu'$.
			In this case we consider the AN $f$ such that:
			\begin{itemize}
			\item $X_{\ispe} = \B \times \entiers{\frac{p}{|S|}}$,
	      $X_{\jspe} = \B \times \entiers{\frac{p}{|S'|}}$,
				and $X_i = \{0\}$ for all $i \notin \{\ispe,\jspe\}$.
				Given $x \in X$, we denote $x_{\ispe} = (x_{\ispe}^b, x_{\ispe}^\ell)$ 
				the state of $\ispe$ (and analogously for $\jspe$); and
			\item $f_{\ispe}(x) = \begin{cases}
        	(x_{\jspe}^b, x_{\ispe}^\ell + 1 \mod \frac{p}{|S|}) 
        		& \text{if } x_{\ispe}^\ell = 0\\
	        (x_{\ispe}^b, x_{\ispe}^\ell + 1 \mod \frac{p}{|S|}) 
  	      	& \text{otherwise}
    	  \end{cases}$,\\
	      $f_{\jspe}(x) = \begin{cases}
	        (x_{\ispe}^b, x_{\jspe}^\ell + 1 \mod \frac{p}{|S'|}) 
	        	& \text{if } x_{\jspe}^\ell = 0\\
        	(x_{\jspe}^b, x_{\jspe}^\ell + 1 \mod \frac{p}{|S'|}) 
        		& \text{otherwise}
	      \end{cases}$, and\\
	      $f_i(x) = x_i$ for all $i \notin \{\ispe,\jspe\}$.
 			\end{itemize}
			Note that $\ispe$ (resp.~$\jspe$) is updated $\frac{p}{|S|}$ 
			(resp.~$\frac{p}{|S'|}$) times during a step in both $\mu$ and $\mu'$.
			Therefore for any $x \in X$, its two images under $\mu$ and $\mu'$ verify
			$\mmjoblock{f}{\mu}(x)^\ell_{\ispe} = \mmjoblock{f}{\mu'}(x)^\ell_{\ispe} 
				= x^\ell_{\ispe}$ (and analogously for $\jspe$).
			Thus for the evolution of the states of $\ispe$ and $\jspe$ during a step,
			the second element is fixed and only the first element (in $\B$) may 
			change.
			We split $X$ into $X^{=} = \{x \in X \mid x_{\ispe}^b = x_{\jspe}^b\}$
			and $X^{\neq} = \{x \in X \mid x_{\ispe}^b \neq x_{\jspe}^b\}$,
			and observe the following facts by the definition of $f_{\ispe}$ and 
			$f_{\jspe}$:
			\begin{itemize}
			\item Under $\mu$ and $\mu'$, all the elements of $X^=$ are fixed points
				(indeed, only $x_{\ispe}^b$ and $x_{\jspe}^b$ may evolve by copying the 
				other).
			\item Under $\mu$, let $m, m'$ be the respective number of times $\ispe,
				\jspe$ have been updated prior to the $k$-th block of $\varphi(\mu)$ in 
				which they are updated synchronously.
				Consider the configurations $x, y \in X^{\neq}$ with $x_{\ispe} = 
				(0, -m \mod \frac{p}{|S|})$, $x_{\jspe} = (1, -m' \mod \frac{p}{|S'|})$,
				$y_{\ispe} = (1, -m \mod \frac{p}{|S|})$ and $y_{\jspe} = (0, -m' \mod	
				\frac{p}{|S'|})$.
				It holds that $\mmjoblock{f}{\mu}(x) = y$ and 
				$\mmjoblock{f}{\mu}(y) = x$, because $x_{\ispe}^b$ and $x_{\jspe}^b$ are 
				exchanged synchronously when $x_{\ispe}^\ell = x_{\jspe}^\ell = 0$ 
				during the $k$-th block of $\varphi(\mu)$, and are not exchanged again 
				during that step by the choice of the modulo.
				Hence, $\mmjoblocklimit{f}{\mu}$ has a limit cycle of length two.
			\item Under $\mu'$, for any $x \in X^{\neq}$, there is a substep with 
				$x_{\ispe}^\ell = 0$ and there is a substep with $x_{\jspe}^\ell = 0$,
				but they are not the same substep (because $\ispe$ and $\jspe$ are never 
				synchronised in $\mu'$).
				As a consequence, $x_{\ispe}^b$ and $x_{\jspe}^b$ will end up having the 
				same value (the first to be updated copies the bit from the second, then 
				the second copies its own bit), \emph{i.e.}~$\mmjoblock{f}{\mu'}(x) \in 
				X^{=}$, and therefore $\mmjoblocklimit{f}{\mu'}$ has only fixed points.
		  \end{itemize}
			We conclude in this case that $\mmjoblocklimit{f}{\mu} \not \sim
			\mmjoblocklimit{f}{\mu'}$, because one has a limit cycle of length two, 
			whereas the other has only fixed points.

		\item[(2.2)] If the permutation preserves the columns within the matrices
	  (meaning that the automata within the same column in $\mu$ are also in the 
	  same column in $\mu'$), then we consider two last subcases:
	  	\begin{enumerate}
	  	\item[(2.2.1)] Moreover, if the permutation of some matrix is not 
	  		circular (meaning that there are three columns which are not in the same 
	  		relative order in $\mu$ and $\mu'$), then there are three automata 
	  		$\ispe$, $\jspe$ and $\kspe$ in the same matrix such that in $\mu$, 
	  		automaton $\ispe$ is updated first, then $\jspe$, then $\kspe$; whereas
				in $\mu'$, automaton $\ispe$ is updated first, then $\kspe$, then 
				$\jspe$. 
				Let us consider the automata network $f$ such that:
				\begin{itemize}
				\item $X = \B^n$;
				\item $f_{\ispe}(x) = x_{\kspe}$, $f_{\jspe}(x) = x_{\ispe}$ and 
					$f_{\kspe}(x) = x_{\jspe}$; and
				\item $f_i(x) = x_i$ if $i \notin \{\ispe, \jspe, \kspe\}$.
				\end{itemize}
				If the three automata are updated in the order $\ispe$ then $\jspe$ 
				then $\kspe$, as it is the case with $\mu$, then after any update, 
				they will all have taken the same value.
				It implies that $\mmjoblock{f}{\mu}$ has only fixed points,
				precisely the set 
				$P = \{x \in \B^n \mid x_{\ispe} = x_{\jspe} = x_{\kspe}\}$.

				If they are updated in the order $\ispe$ then $\kspe$ then $\jspe$, as
				with $\mu'$, however, the situation is a bit more complex.
				We consider two cases, according to the number of times they are updated 
				during a period (recall that since they belong to the same matrix,
				they are updated repeatedly in the same order during the substeps):
				\begin{itemize}
				\item If they are updated an odd number of times each, then automata $
					\ispe$ and $\jspe$ will take the initial value of automaton $\kspe$, 
					and automaton $\kspe$ will take the initial value of automaton 
					$\jspe$.
					In this case, $\mmjoblocklimit{f}{\mu'}$ has the fixed points $P$
					and limit cycles of length two.
				\item If they are updated an even number of times each, then the reverse 
					will occur: 
					automata $\ispe$ and $\jspe$ will take the initial value of automaton 
					$\jspe$, and automaton $\kspe$ will keep its initial value.
					In this case, $\mmjoblocklimit{f}{\mu'}$ has the fixed points 
					$Q = \{x \in \B^n \mid x_{\ispe} = x_{\jspe}\}$ which strictly 
					contains $P$ (\emph{i.e.}~$P \subseteq Q$ and $Q \setminus P \neq
					\emptyset$).
				\end{itemize}
				In both cases $\mmjoblocklimit{f}{\mu'}$ has more than the fixed points 
				$P$ in its limit set, hence we conclude that
				$\mmjoblocklimit{f}{\mu} \not \sim \mmjoblocklimit{f}{\mu'}$.
	  	\item[(2.2.2)] Moreover, if the permutation of all matrices is circular,
				then we first observe that when $\varphi(\mu)$ and $\varphi(\mu')$ have 
				one block in common, they have all blocks in common (because of the 
				circular nature of permutations), \emph{i.e.}~$\mu \equiv_\star \mu'$.
				Thus, under our hypothesis, we deduce that $\varphi(\mu)$ and 
				$\varphi(\mu')$ have no block in common.
				As a consequence, there exist automata $\ispe, \jspe$ with the property 
				from case~(2.1), namely synchronised in a block of 
				$\varphi(\mu)$ but never synchronised in any block of $\varphi(\mu')$,
			  and the same construction terminates this proof.
			\end{enumerate}
		\end{enumerate}
  \end{enumerate}
\end{proof}

Let $\BPn^\star = \BPn / \equiv_\star$ denote the corresponding quotient set.

\begin{theorem}
	Let $\lcm(i) = \lcm(\{j \in \llbracket 1, d(i) \rrbracket \mid 
	m(i,j) \geq 1\})$.
	For any $n \geq 1$, we have:
	\[
		|\BPn^\star| = 
			\sum_{i = 1}^{p(n)} 
				\frac{n!}{\prod_{j = 1}^{d(i)} 
					\left( 
						m(i,j)! 
					\right)^j}
				\cdot 
				\frac{1}{\lcm(i)}\text{.}
  \]
\end{theorem}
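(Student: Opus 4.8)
The plan is to reinterpret $|\BPn^\star|$ as a count of orbits under a cyclic group action and then to show that this action is free. Since $\mu \equiv_0 \mu'$ holds exactly when $\varphi(\mu) = \varphi(\mu')$, the map $[\mu]_{\equiv_0} \mapsto \varphi(\mu)$ is a bijection between $\BPn^0$ and the set $\varphi(\BPn)$ of block-sequences arising from block-parallel modes; because $\equiv_0 \subseteq \equiv_\star$, the relation $\equiv_\star$ descends, through $\varphi$, to the equivalence ``equal up to a circular shift $\sigma^i$'' on $\varphi(\BPn)$. I would first observe that $\equiv_\star$ never relates modes built from different integer partitions of $n$: the number of occurrences of an automaton in a block-sequence of length $p$ determines the size of its o-block (an element of an o-block of size $j$ occurs $p/j$ times), and circular shifts preserve these occurrence counts, hence preserve the underlying partition. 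This lets me split the count partition by partition.

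Next, I fix the $i$-th partition of $n$. All its block-sequences have the common length $p_i = \lcm(n_1, \dots, n_s) = \lcm(i)$, and by Formula~\ref{BPn0_eq1} of Theorem~\ref{theorem:BPn0} their number is $N_i = \frac{n!}{\prod_{j = 1}^{d(i)} (m(i,j)!)^j}$. The cyclic group $\Z/p_i\Z$ acts on these $N_i$ sequences by $\sigma^t$; the action is well defined because a circular shift of a $\varphi$-image is again a $\varphi$-image (rotate the internal order of each o-block of size $n_k$ by $t \bmod n_k$). It then suffices to count the orbits of this action.

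The key step, and the main obstacle, is to prove that this action is free, i.e.~that no nontrivial shift fixes a block-sequence. Suppose $\sigma^t(W) = W$ for some $0 < t < p_i$, with $W = (W_0, \dots, W_{p_i - 1})$. Consider any automaton $a$ lying in column $c$ of an o-block of size $j$: by the definition of $\varphi$, $a$ occurs in exactly the blocks $W_\ell$ with $\ell \equiv c \pmod{j}$, that is, in the positions forming the coset $c + \langle j \rangle$ of the subgroup $\langle j \rangle \leq \Z/p_i\Z$ (using $j \mid p_i$). The fixed-point hypothesis forces this coset to be invariant under translation by $t$, which happens if and only if $t \in \langle j \rangle$, i.e.~$j \mid t$. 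Since every part size $j$ with $m(i,j) \geq 1$ occurs as some o-block size, $t$ must be a common multiple of all of them, so $\lcm(i) \mid t$, contradicting $0 < t < \lcm(i)$. Hence every orbit has size exactly $p_i = \lcm(i)$.

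Finally, a free action of $\Z/\lcm(i)\Z$ partitions the $N_i$ sequences into orbits of equal size $\lcm(i)$, so partition $i$ contributes $N_i / \lcm(i)$ equivalence classes. Summing over all $p(n)$ partitions yields
\[
  |\BPn^\star| = \sum_{i = 1}^{p(n)} \frac{n!}{\prod_{j = 1}^{d(i)} \left( m(i,j)! \right)^j} \cdot \frac{1}{\lcm(i)}\text{,}
\]
as claimed. I expect the only delicate point to be the freeness argument above; the reduction to orbit counting and the equal-size conclusion are then routine applications of the orbit--stabilizer principle.
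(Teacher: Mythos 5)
Your proposal is correct, and its skeleton matches the paper's proof: both split the count by the underlying integer partition (using the fact that shifts preserve each automaton's number of occurrences, hence its o-block size), both take the per-partition count $n!/\prod_{j=1}^{d(i)}(m(i,j)!)^j$ of $\equiv_0$-classes from Formula~\ref{BPn0_eq1} of Theorem~\ref{theorem:BPn0}, and both conclude by showing that each $\equiv_\star$-class contains exactly $\lcm(i)$ of these classes. The difference lies in how that last, key step is justified. The paper gets it almost for free from Remark~\ref{remark:BP-bloc-equal-size}: all blocks of $\varphi(\mu)$ are pairwise distinct, so the $\lcm(i)$ circular shifts of $\varphi(\mu)$ are pairwise distinct sequences. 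You instead prove freeness of the $\Z/\lcm(i)\Z$-action from scratch, via the coset argument (an automaton in column $c$ of an o-block of size $j$ occupies the coset $c+\langle j\rangle$, whose invariance under translation by $t$ forces $j \mid t$, hence $\lcm(i)\mid t$); this is a genuinely different, self-contained route to the same fact. Your write-up also makes explicit a point the paper leaves implicit: that $\varphi(\BPn)$ is closed under circular shifts (rotate the internal order of each o-block by $t \bmod n_k$). This closure is actually needed in both proofs --- without it, an $\equiv_\star$-class could a priori contain \emph{fewer} than $\lcm(i)$ elements of $\BPn^0$, since some shifted sequence might fail to be the $\varphi$-image of any partitioned order --- so your version both generalises cleanly (orbit counting under a free action) and plugs a small rigor gap, at the cost of being longer than the paper's appeal to Remark~\ref{remark:BP-bloc-equal-size}.
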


\begin{proof}
	Let $\mu, \mu' \in \BPn$ two update modes such that $\mu \equiv \mu'$. 
	Then their sequential forms are of the same length, and each automaton 
	appears the same number of times in both of them. 
	This means that, if an automaton is in an o-block of size $k$ in $\mu$'s 
	partitioned order form, then it is also in an o-block of the same size in 
	$\mu'$'s. 
	We deduce that two update modes of size $n$ can only be equivalent as defined 
	in Definition~\ref{def:dyn_iso} if they are generated from the same partition 
	of $n$.

	Let $\mu \in \BPn^0$, generated from partition $i$ of $n$.
	Then $\varphi(\mu)$ is of length $\lcm(i)$. 
	Since no two elements of $\BPn^0$ have the same block-sequential form, the 
	equivalence class of $\mu$ in $\BPn^0$ contains exactly $\lcm(i)$ elements, 
	all generated from the same partition $i$ (all the blocks of $\varphi(\mu)$ 
	are different).
	Thus, the number of elements of $\BPn^\star$ generated from a partition $i$
	is the number of elements of $\BPn^0$ generated from partition $i$,
	divided by the number of elements in its equivalence class for $\BPn^\star$, 
	namely $\lcm(i)$.
\end{proof}

\begin{remark}\normalfont\label{remark:BPnstar}
  The formula for $|\BPn^\star|$ can actually be obtained from any formula in
  Theorem~\ref{theorem:BPn0} by multiplying by $\frac{1}{\lcm(i)}$ inside the 
  sum on partitions (from $i = 1$ to $p(n)$).
\end{remark}

While counting the elements of $\BPn^\star$ was pretty straightforward,
enumerating them by ensuring that no two partitioned orders are the same up 
to circular permutation of their block-sequential rewritings 
(Definition~\ref{def:dyn_iso}) is more challenging.
This is performed by Algorithm~\ref{algo:BPiso}.
It works much like Algorithm~\ref{algo:BP0}, except for the following differences. 
In the first function, \texttt{EnumBPiso}:
right after choosing the partition, a list of coefficients $a[j]$ is determined, 
in a modified algorithm that computes $\lcm(i)$ inductively (lines 3-10). 
These coefficients are used in the second auxiliary function 
\texttt{EnumBlockIsoAux}, where the minimum $min_j$ of the matrix $M_j$ is 
forced to be in the first $a[j]$ columns of said matrix (lines 35-37, the 
condition is fulfilled when $min_j$ has not been chosen within the $a[j] - 1$ 
first columns, then it is placed in that column so only $m(i,j) - 1$ elements 
are chosen).
The correction of Algorithm~\ref{algo:BPiso} is argued in the following 
statement.

\begin{algorithm}[t!]
  {\small \caption{Enumeration of $\BPn^\star$}
  \label{algo:BPiso}
  \DontPrintSemicolon
  \SetKw{KwAnd}{and}
  \SetKwProg{Fn}{Function}{:}{}
  \SetKwFunction{FEnumBP}{EnumBPiso}
  \SetKwFunction{FEnumBPaux}{EnumBPisoAux}
  \SetKwFunction{FEnumBlock}{EnumBlockIso}
  \SetKwFunction{FEnumBlockAux}{EnumBlockIsoAux}
  \SetKwComment{Comment}{\# }{}
	\bigskip  
  
  \Fn{\FEnumBP{$n$}}{
    \ForEach{$i \in \emph{partitions}(n)$}{
      $a$ is a list of size $d(i)$\;
      $b \leftarrow 1$\;
      \For{$j \leftarrow d(i)$ \KwTo $1$}{
        \If{$m(i,j) > 0$}{
          $a[j] \leftarrow \gcd(b, j)$\;
          $b \leftarrow \lcm(b, j)$\;
        }
        \Else{
          $a[j] \leftarrow j$\;
        }
      }
      \FEnumBPaux{$n$, $i$, $1$, $a$}
    }
  }\medskip
  
  \Fn{\FEnumBPaux{$n$, $i$, $j$, $a$, $M_1$, \dots, $M_{j-1}$}}{
    \If{$d(i) < j$}{
      \texttt{enumerate(}$M$\texttt{)}\;
      \Return
    }
    \If{$m(i, j) > 0$}{ 
      \ForEach{combination $A$ of size $j \cdot m(i, j)$ among
				$\entiers{n} \setminus \bigcup_{k = 1}^{j - 1} M_k$}{
        $min_j \leftarrow min(A)$\;
        \ForEach{$M_j$ enumerated by \FEnumBlock{$A$, $j$, $m(i,j)$,
					$min_j$, $a \emph{[} j \emph{]}$}}{
          \FEnumBPaux{$n$, $i$, $j+1$, $a$, $M_1$, \dots, $M_{j-1}$, $M_j$}
        }
      }
    }
    \Else{
      \FEnumBPaux{$n$, $i$, $j+1$, $a$, $M_1$, \dots, $M_{j-1}$, $\emptyset$}
    }
  }\medskip

  \Fn{\FEnumBlock{$A$, $j$, $m$, $min_j$, $a_j$}}{
    \ForEach{$C$ enumerated by \FEnumBlockAux{$A$, $m$, $min_j$, $a_j$}}{
      \For{$k \leftarrow 1$ \KwTo $m$}{
        \For{$\ell \leftarrow 1$ \KwTo $j$}{
          $M_j[k][\ell] = C_\ell[k]$\;
        }
      }
      \texttt{enumerate(}$M_j$\texttt{)}\;
      \Return
    }
  }\medskip

  \Fn{\FEnumBlockAux{$A$, $j$, $m$, $min_j$, $a_j$, $C_1$, ..., $C_\ell$}}{
    \If{$A = \emptyset$}{
      \texttt{enumerate(}$C$\texttt{)}\;
      \Return
    }
    \Else{
      \If{$|A| = m \cdot(j - a_j + 1)$ \KwAnd $min_j \in A$}{
        \ForEach{combination $B$ of size $m - 1$ among $A \setminus \{min_j\}$}{
          \FEnumBlockAux{$A \setminus (B \cup \{min_j\})$, $m$, $min_j$, $a_j$,
						$C_1$, \dots, $C_\ell$, $(B \cup \{min_j\})$}\;
        }
      }
      \Else{
        \ForEach{combination $B$ of size $m$ among $A$}{
          \FEnumBlockAux{$A \setminus B$, $m$, $min_j$, $a_j$, $C_1$, \dots,
						$C_\ell$, $B$}\;
        }
      }
    }
  }}
\end{algorithm}

\begin{theorem}
 Algorithm~\ref{algo:BPiso} is correct: it enumerates $\BPn^\star$.
\end{theorem}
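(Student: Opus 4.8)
The plan is to follow the same two-step template used for the correctness of Algorithms~\ref{algo:BP} and~\ref{algo:BP0}: first show that the algorithm outputs exactly $|\BPn^\star|$ update modes, and then show that no two outputs lie in the same $\equiv_\star$-class. Since a mode and its images under $\sigma^t$ always share the same supporting partition of $n$ (a shift rotates each o-block cyclically, hence preserves it as a set and preserves the matrix sizes), it suffices to reason partition by partition, so I would fix a partition $i$ throughout. Because the matrix-filling skeleton is inherited verbatim from Algorithm~\ref{algo:BP0}, I can reuse that (a) distinct outputs have distinct matrix-representations up to permutation within columns, i.e.~are pairwise non-$\equiv_0$ (Definition~\ref{def:dyn_eq}), and (b) for a fixed partition $i$ the unconstrained column enumeration produces $\frac{n!}{\prod_{j}(m(i,j)!)^j}$ modes, the per-partition term of Formula~\ref{BPn0_eq1}. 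Combining an injective map into the classes with matching cardinalities will then yield the desired bijection onto $\BPn^\star$.

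For the counting step, the point is that the constraint ``$min_j$ must lie in one of the first $a[j]$ columns of $M_j$'' retains, by symmetry among the $j$ columns, exactly a fraction $\frac{a[j]}{j}$ of the column-fillings of each matrix $M_j$ with $m(i,j)>0$. Hence for partition $i$ the algorithm outputs $\frac{n!}{\prod_j (m(i,j)!)^j}\cdot\prod_{j:\,m(i,j)>0}\frac{a[j]}{j}$ modes, and I must check that the last product equals $\frac{1}{\lcm(i)}$. Writing $j_1>\dots>j_r$ for the part sizes with positive multiplicity and letting $b_s=\lcm(j_1,\dots,j_s)$ be the value of the variable $b$ after processing $j_s$ (lines 7--8), the assignment $a[j_s]=\gcd(b_{s-1},j_s)$ gives $\frac{b_s}{b_{s-1}}=\frac{j_s}{\gcd(b_{s-1},j_s)}=\frac{j_s}{a[j_s]}$, so that $\frac{a[j_s]}{j_s}=\frac{b_{s-1}}{b_s}$ and the product telescopes to $\frac{b_0}{b_r}=\frac{1}{\lcm(i)}$ (with $b_0=1$). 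This recovers exactly the per-partition term of $|\BPn^\star|$.

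The main work, and where I expect the real difficulty, is the injectivity step. Suppose two distinct outputs $\mu,\mu'$ of partition $i$ satisfy $\mu\equiv_\star\mu'$, say $\varphi(\mu)=\sigma^t(\varphi(\mu'))$ with $t\in\entiers{\lcm(i)}$; I want to force $t=0$, which yields $\mu\equiv_0\mu'$ and hence $\mu=\mu'$ by the inherited no-repeat property, a contradiction. The shift $\sigma^t$ permutes the columns of each matrix $M_j$ by $t \bmod j$ while preserving its element set, so $min_j$ is the same element in $\mu$ and $\mu'$; writing $c_j$ (resp.~$c_j'$) for its $0$-indexed column in $M_j$, we get $c_j\equiv c_j'+t\pmod{j}$ with $c_j,c_j'\in\entiers{a[j]}$ by the enforced constraint. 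I would then prove $b_s\mid t$ by induction on $s$: for $s=1$, $a[j_1]=\gcd(1,j_1)=1$ forces $c_{j_1}=c_{j_1}'=0$, whence $j_1\mid t$; for the step, assuming $b_{s-1}\mid t$, the coefficient $g:=a[j_s]=\gcd(b_{s-1},j_s)$ divides $b_{s-1}$ and hence $t$, so reducing $c_{j_s}\equiv c_{j_s}'+t\pmod{j_s}$ modulo $g$ gives $c_{j_s}\equiv c_{j_s}'\pmod{g}$; since $c_{j_s},c_{j_s}'\in\entiers{g}$ this forces $c_{j_s}=c_{j_s}'$, hence $j_s\mid t$ and therefore $b_s=\lcm(b_{s-1},j_s)\mid t$. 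Taking $s=r$ gives $\lcm(i)\mid t$, so $t=0$. The delicate points I would verify carefully are that the enforced range of $min_j$ is exactly the first $a[j]$ columns (reading lines 35--37: $min_j$ is placed during the normal filling of columns $1,\dots,a[j]-1$, or forced into column $a[j]$ precisely when $|A|=m(i,j)\cdot(j-a[j]+1)$ and $min_j\in A$), that each admissible filling is thereby produced exactly once, and that the column action of $\sigma^t$ is indeed reduction modulo $j$ on each matrix; none of these is deep, but they are exactly what makes the clean modular induction legitimate and what closes the argument via an injective, hence bijective, map onto $\BPn^\star$.
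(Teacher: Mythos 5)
Your proposal is correct and follows essentially the same strategy as the paper's proof: establish that the output count equals $|\BPn^\star|$ by showing each matrix $M_j$ contributes a fraction $\frac{a[j]}{j}$ of its unconstrained fillings with $\prod_j \frac{a[j]}{j} = \frac{1}{\lcm(i)}$, and then show pairwise non-equivalence by an induction over part sizes in decreasing order that forces every part to divide the shift, hence the shift is zero and the $\equiv_0$-distinctness inherited from Algorithm~\ref{algo:BP0} applies. Your symmetry argument for the $\frac{a[j]}{j}$ fraction, the telescoping product via $\frac{a[j_s]}{j_s}=\frac{b_{s-1}}{b_s}$, and the reduction of the column congruence modulo $g=\gcd(b_{s-1},j_s)$ are slightly cleaner packagings of the paper's explicit binomial computation, loop induction, and interval argument ($d \in \llbracket -a_{j'}+1, a_{j'}-1\rrbracket$ with $a_{j'} \mid d$), but the underlying ideas are the same.
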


\begin{proof}
  We first argue that Algorithm~\ref{algo:BPiso} enumerates the correct number
  of block-parallel update modes, and then that any pair $\mu, \mu'$
  enumerated is such that $\mu \not \equiv_\star \mu'$.

	For ease of reading in the rest of this proof, we will denote $a[j]$ as $a_j$
	and $m(i, j)$ as $m_{ij}$.
	From the placement of $min_j$ described above (forced to be within the first 
	$a_j$ columns of matrix $M_j$), the difference with Algorithm~\ref{algo:BP0} 
	is that, instead of having $\prod_{\ell = 1}^{j} \binom{(j - \ell + 1) \cdot
	m_{ij}}{m_{ij}}$ ways of filling matrix $M_j$, we only have the following
	number of ways (recall that $M_j$ has $j$ columns and $m_{ij}$ rows):
	\[
		\sum_{k = 1}^{a_j} 
			\left( 
				\prod_{\ell = 1}^{k - 1} 
					\binom{(j - \ell + 1) \cdot m_{ij} - 1}{m_{ij}} 
			\right) 
			\cdot 
			\binom{(j - k + 1) \cdot m_{ij} - 1}{m_{ij} - 1} 
			\cdot 
			\left(
				\prod_{\ell = k + 1}^{j}
					\binom{(j - \ell + 1) \cdot m_{ij}}{m_{ij}}
			\right)\text{.}
	\]
	Indeed, the formula above sums, for each choice of a column $k$ from $1$ to 
	$a_j$ where $min_j$ will be placed, the number of ways to place some elements 
	within columns $1$ to $k-1$ (first product on $\ell$), times the number of 
	ways to choose some elements that will accompany $min_j$ within column $k$ 
	(middle binomial coefficient), times the number of ways to place some other 
	elements within the remaining columns $k+1$ to $j$ (second product on $\ell$).
	Now, we have 
	$\binom{(j - k + 1) \cdot m_{ij}-1}{m_{ij} - 1} 
		= 
			\frac{m_{ij}}{(j - k + 1) \cdot m_{ij}} 
			\cdot 
			\binom{(j - k + 1) \cdot m_{ij}}{m_{ij}} 
		= 
			\frac{1}{(j - k + 1)} \cdot \binom{(j - k + 1) 
			\cdot 
			m_{ij}}{m_{ij}}$. 
	We also have
	$\binom{(j - \ell + 1) \cdot m_{ij} - 1}{m_{ij}}
		=
			\frac{j - l}{j - (l - 1)} 
			\cdot
			\binom{(j - \ell + 1) \cdot m_{ij}}{m_{ij}}$.
	This means that the sum of the possible ways to choose the content
	of matrix $M_j$ can be rewritten as follows:
	\begin{multline*}
		\sum_{k = 1}^{a_j}
			\left(
				\frac{1}{(j - k + 1)} \cdot \prod_{\ell = 1}^{k - 1}
					\frac{j - \ell}{j - (\ell - 1)} \cdot \prod_{\ell = 1}^j 
						\binom{(j - \ell + 1) \cdot m_{ij}}{m_{ij}}
			\right)\\ 
		\begin{split}
			\;=\; & 
				\sum_{k = 1}^{a_j}
					\left(
						\frac{1}{(j - k + 1)} \cdot \frac{j - k + 1}{j} \cdot 
						\prod_{\ell = 1}^j
							\binom{(j - \ell + 1) \cdot m_{ij}}{m_{ij}}
					\right)\\
			\;=\; & 
				\sum_{k = 1}^{a_j} 
					\left(
						\frac{1}{j} \cdot \prod_{\ell = 1}^j 
							\binom{(j - \ell + 1) \cdot m_{ij}}{m_{ij}}
					\right)\\
		  \;=\; & 
		  	\frac{a_j}{j} \cdot \prod_{\ell = 1}^j
		  		\binom{(j - \ell + 1) \cdot m_{ij}}{m_{ij}}\text{.}
		\end{split}
	\end{multline*}
	Hence in total, the algorithm enumerates the following number of update modes:
	\[
		\sum_{i = 1}^{p(n)}
			\prod_{j = 1}^{d(i)}
				\left(
					\binom{n - \sum_{k = 1}^{j - 1} k \cdot m_{ik}}{j \cdot m_{ij}}
					\cdot
					\frac{a_j}{j} \cdot \prod_{\ell = 1}^{j}
						\binom{(j - \ell + 1) \cdot m_{ij}}{m_{ij}}
    		\right)\text{.}
  \]
	In order to prove that this number is equal to $|\BPn^\star|$, we need to 
	prove that $\prod_{j = 1}^{d(i)} \frac{a_j}{j} = \frac{1}{\lcm(i)}$.
	Denoting $L(j) = \lcm(\{k \in \entiers{j, d(i)} \mid m_{ik} > 0\})$, we prove 
	by induction that at the end of each step of the \textbf{for} loop from lines 
	5-10, we have:
	\[
		\prod_{k = j}^{d(i)} 
			\frac{a_k}{k} 
		= 
			\frac{1}{L(j)}\text{,}
	\] 
	and the claim follows (when $j = 1$, we get $L(j) = \lcm(i)$).
	At the first step, $j = d(i)$, and 
	\[
		\frac{a_{d(i)}}{d(i)} 
		= 
			\frac{\gcd(\{d(i), 1\})}{d(i)} 
		= 
			\frac{1}{L(j)}\text{.}
	\]
	We assume as induction hypothesis that for a given $j$, we have
	$\prod_{k = j}^{d(i)}\frac{a_k}{k} = \frac{1}{L(j)}$.
	There are two possible cases for $j - 1$:
	\begin{itemize}
	\item If $m_{i(j-1)} = 0$, then 
		\[
			a_{j-1} = j-1 \text{ and }
			\prod_{k = j - 1}^{d(i)}
				\frac{a_k}{k} 
			= 
				\frac{j - 1}{(j - 1)\cdot L(j)} 
			= 
				\frac{1}{L(j-1)}\text{.}
		\]
	\item Otherwise, 
		\[
			\prod_{k = j - 1}^{d(i)}
				\frac{a_k}{k} 
			= 
				\frac{\gcd(\{j - 1, L(j)\})}{(j - 1) \cdot L(j)}\text{.}
		\]
		And since $\frac{a \cdot b}{\gcd(\{a, b\})} = \lcm(\{a, b\})$,
		we have
		\[
			\prod_{k = j - 1}^{d(i)}
				\frac{a_k}{k} 
			= 
				\frac{1}{\lcm(\{j-1, L(j)\})} 
			=
		    \frac{1}{L(j-1)}\text{.}
		\]
	\end{itemize}
	We conclude that at the end of the loop, we have $\prod_{j = 1}^{d(i)}
		\frac{a_j}{j} = \frac{1}{\lcm(i)}$, and thus that the algorithm enumerates 
                the following number of update modes (cf.~Remark~\ref{remark:BPnstar}):
	\[
		\sum_{i = 1}^{p(n)} 
			\prod_{j = 1}^{d(i)}
				\left(
					\binom{n - \sum_{k = 1}^{j-1} k \cdot m_{ik}}{j \cdot m_{ij}}
					\cdot
					\prod_{\ell = 1}^{j}
						\binom{(j - \ell + 1) \cdot m_{ij}}{m_{ij}}
				\right) 
			\cdot 
			\frac{1}{\lcm(i)} 
		= 
			|\BPn^\star|\text{.}
	\]\medskip

We now need to prove that the algorithm does not enumerate two equivalent
update modes (in the sense of $\equiv_\star$).
Algorithm~\ref{algo:BPiso} is heavily based on Algorithm~\ref{algo:BP0}, in
such a way that, for a given input, the output of Algorithm~\ref{algo:BPiso} 
will be a subset of that of Algorithm~\ref{algo:BP0}.
Algorithm~\ref{algo:BP0} enumerates $\BPn^0$, which implies that every update
mode enumerated by it has a different image by $\varphi$. 
This means that every block-parallel update mode enumerated by 
Algorithm~\ref{algo:BPiso} also has a different image by $\varphi$, and that the 
algorithm does not enumerate two equivalent update modes with a shift of $0$.

We now prove by contradiction that the algorithm does not enumerate two 
equivalent update modes with a non-zero shift. 
Let $\mu, \mu' \in \BPn$ be two update modes, both enumerated by 
Algorithm~\ref{algo:BPiso}, such that $\mu \equiv_\star \mu'$ with a non-zero 
shift. 
Then, there is $k \in \entiers[1]{|\varphi(\mu')| - 1}$ such that 
$\varphi(\mu) = \sigma^k(\varphi(\mu'))$, and $\mu, \mu'$ are both generated
from the same partition $i$, with $|\varphi(\mu)| = |\varphi(\mu')| = \lcm(i)$.
Moreover, for each $j \in \entiers{d(i)}$ the matrix $M_j$ must contain
the same elements $A$ in both $\mu$ and $\mu'$ (so that they are repeated every 
$j$ blocks in both $\varphi(\mu)$ and $\varphi(\mu')$), hence in particular 
$min_j$ is the same in both enumerations.

We will prove by induction that every $j \in \entiers{d(i)}$ such that 
$m(i, j) > 0$ divides $k$, and therefore $k = 0$ (equivalently $k = \lcm(i)$), 
leading to a contradiction.
For the base case $j = d(i)$, we have $a_{d(i)} = 1$ (first iteration of the 
\textbf{for} loop lines 5-10), hence the call of \texttt{EnumBlockIsoAux} for 
any set $A$ passes the condition of line 35 and $min_{d(i)}$ is immediately 
chosen to belong to $C_1$ (the first column of $M_{d(i)}$).
When converted to block-sequential update modes, it means that $min_{d(i)}$
appears in all blocks indexed by $d(i) \cdot t$ with $t \in \N$, hence $k$ must 
be a multiple of $d(i)$ so that $\sigma^k$ maps blocks containing $min_{d(i)}$
to blocks containing $min_{d(i)}$.

As induction hypothesis, assume that for a given $j$, every $\ell \in 
\llbracket j, d(i)\rrbracket$ such that $m_{i\ell} > 0$ divides $k$. 
We will prove that $j'$, the biggest number in the partition $i$ 
(\emph{i.e.}~with $m_{ij'} > 0$) that is smaller than $j$, also divides $k$.
In the matrix $M_{j'}$, the minimum $min_{j'}$ is forced to appear within the 
$a_{j'}$ first columns.
This means that block indexes where it appears in both $\varphi(\mu)$ and 
$\varphi(\mu')$ can be written respectively as $j'\cdot t + b$ and 
$j'\cdot t + b'$ respectively, with $t \in \N$ and $b , b'\in 
\llbracket 1, a_{j'}\rrbracket$.
As a consequence, an automaton from $M_{j'}$ that is at the position $b$ in
$\varphi(\mu)$ is at a position of the form $j' \cdot t + b'$ in 
$\varphi(\mu')$. 
It follows that $b + k = j'\cdot t + b'$, which can be rewritten as 
$k = t \cdot j' + b' - b = t \cdot j' + d$, with $t \in \N$ and 
$d = b'- b \in \llbracket - a_{j'} + 1, a_{j'} - 1 \rrbracket$.
Moreover, we know by induction hypothesis that every number in the partition $i$
that is greater than $j'$ divides $k$, making $k$ a common multiple of these
numbers. 
We deduce that their lowest common multiple also divides $k$. 
Given that $a_{j'}$ is the $\gcd$ of $j'$ and said $\lcm$ (lines 7-8), it means 
that $a_{j'}$ divides both $j'$ and $k$, which implies that it also divides $d$.
Since $d$ is in $\llbracket -a_{j'} + 1, a_{j'} - 1 \rrbracket$, we have $d = 0$
and thus, $j'$ divides $k$. This concludes the induction.

If every number of the partition $i$ divides $k$ and $k \in \entiers{\lcm(i)}$,
then $k=0$, leading to a contradiction. 
This concludes the proof of correctness.
\end{proof}

\subsection{Implementations}
\label{s:implementations}

Proof-of-concept Python implementations of the three
Algorithms~\ref{algo:BP},~\ref{algo:BP0}, and~\ref{algo:BPiso}
are available on the following repository:
\begin{center}
  \url{https://framagit.org/leah.tapin/blockpargen}.
\end{center}
It is archived by Software Heritage at the following permalink:
\begin{center}
  \href{https://archive.softwareheritage.org/browse/directory/f1b4d83c854a4d042db5018de86b7f41ef312a07/?origin_url=https://framagit.org/leah.tapin/blockpargen}{https://archive.softwareheritage.org/browse/directory/\\f1b4d83c854a4d042db5018de86b7f41ef312a07/?origin\_url=\\https://framagit.org/leah.tapin/blockpargen}.
\end{center}
We have conducted numerical experiments on a standard laptop,
presented on Figure~\ref{fig:xp}.

\begin{figure}
\begin{minipage}{0.45\textwidth}
 \centering
 \begin{tabular}{| c | c | c | c |}
  \hline
  $n$ & $\BPn$ & $\BPn^0$ & $\BPn^\star$ \\
  \hline
  1 & 1 & 1 & 1\\
  & - & - & -\\
  \hline
  2 & 3 & 3 & 2\\
  & - & - & -\\
  \hline
  3 & 13 & 13 & 6\\
  & - & - & -\\
  \hline
  4 & 73 & 67 & 24\\
  & - & - & -\\
  \hline
  5 & 501 & 471 & 120\\
  & - & - & -\\
  \hline
  6 & 4051 & 3591 & 795\\
  & - & - & -\\
  \hline
  7 & 37633 & 33573 & 5565\\
  & - & 0.103s & -\\
  \hline
  8 & 394353 & 329043 & 46060\\
  & 0.523s & 0.996s & 0.161s\\
  \hline
  9 & 4596553 & 3919387 & 454860\\
  & 6.17s & 12.2s & 1.51s\\
  \hline
  10 & 58941091 & 47827093 & 4727835\\
  & 1min24s & 2min40s & 16.3s\\
  \hline
  11 & 824073141 & 663429603 & 54223785\\
  & 21min12s & 38min31s & 3min13s\\
  \hline
  12 & 12470162233 & 9764977399 & 734932121\\
  & 5h27min38s & 9h49min26s & 45min09s\\
  \hline
  \end{tabular}
\end{minipage}\hfill
\begin{minipage}{0.45\textwidth}
\centering
\includegraphics[scale=0.46]{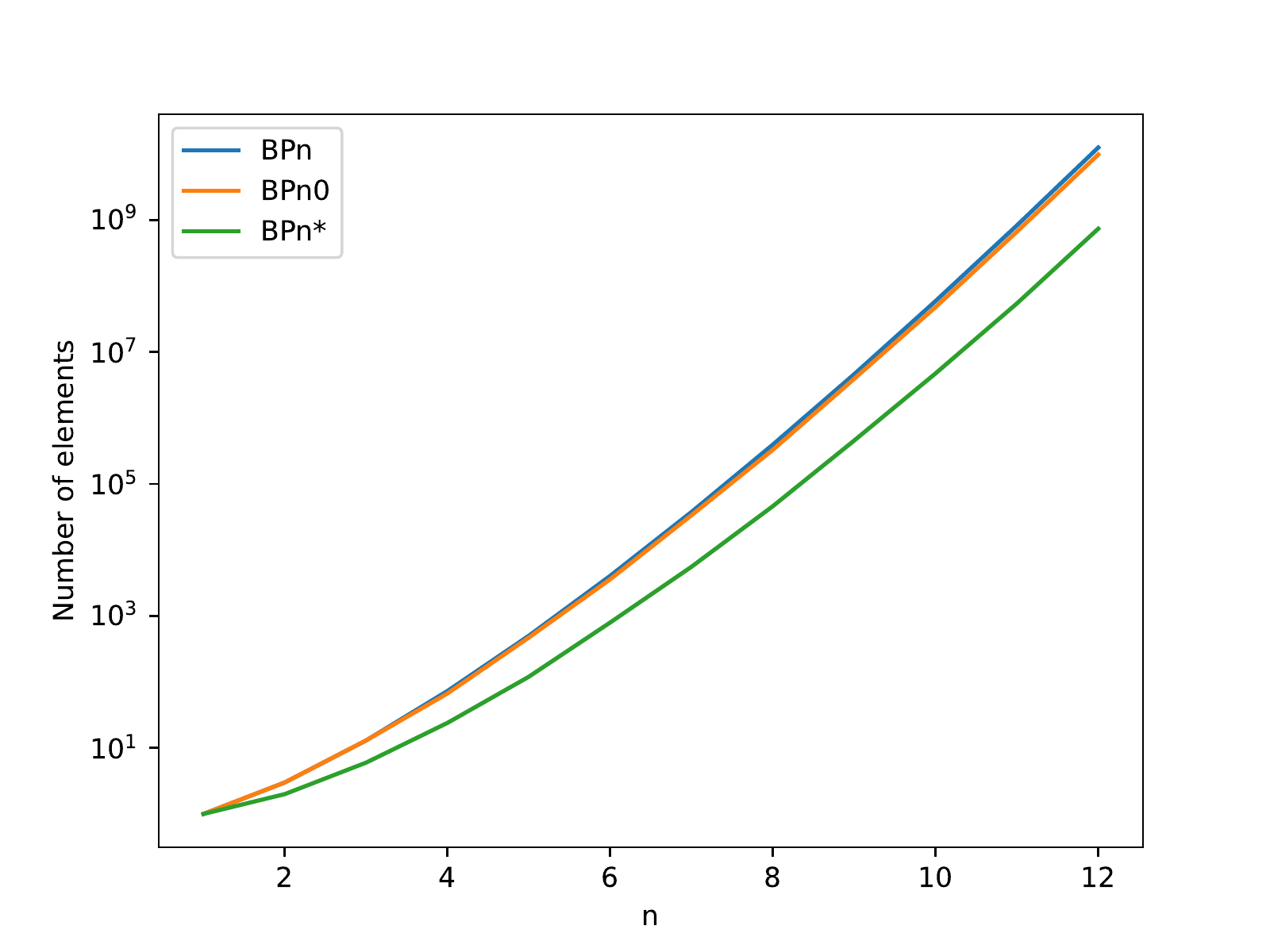}
\end{minipage}
\caption{
  Numerical experiments of our Python implementation of
  Algorithms~\ref{algo:BP},~\ref{algo:BP0}, and~\ref{algo:BPiso}
  on a standard laptop (processor Intel-Core$^\text{TM}$ i7 @ 2.80 GHz).
  For $n$ from $1$ to $12$, the table (left) presents the size of
  $\BPn$, $\BPn^0$ and $\BPn^*$ and running time to enumerate
  their elements (one representative of each equivalence class;
  a dash represents a time smaller than $0.1$ second),
  and the graphics (right) depicts their respective sizes
  on a logarithmic scale.
  Observe that the sizes of $\BPn$ and $\BPn^0$ are comparable,
  whereas an order of magnitude is gained with $\BPn^*$,
  which may be significant for advanced numerical experiments
  regarding limit dynamics under block-parallel udpate modes.
}
\label{fig:xp}
\end{figure}

\section{Conclusion and perspectives}
\label{s:conclusion}

This article presents a first theoretical study on the block-parallel update 
modes in the AN setting, and focuses on some of their combinatorial features.
In particular, beyond their general counting and enumeration from the set theory 
standpoint and a result clarifying their intersection with more classical 
block-sequential update modes, we give neat results of block-parallel update 
modes in connection with the AN-based dynamical systems that they can produce.
Indeed, by basing ourselves on combinatorial proofs about their number, we 
produce algorithmic enumerations of such modes in two cases: 
the first one aims at emphasising formally representatives of block-parallel 
update modes so that they allow to generate all the possible distinct underlying 
dynamical systems, 
and the second one follows the same vein and deals with the block-parallel 
update modes for which we have the guaranty that they do not generate dynamical 
systems having the same set of limit cycles (limit dynamics).

In this context of enumeration algorithms, one of the first natural perspectives 
would concern their complexity. 
First, whilst a proof is still required, 
Algorithms~\ref{algo:BP},~\ref{algo:BP0}, and~\ref{algo:BPiso} seem to belong to 
\textsf{EnumP}~\cite{J-Creignou2019}. 
Indeed, the outputs of these algorithms are of polynomial size regarding the 
number of automata (by definition of partitioned orders), and for each of these 
algorithms, deciding if a set of sublists belongs to their outputs are 
\emph{a priori} in \textsf{P}. 
However, a question remains: to which \textsf{EnumP} subclass$\cdot$es do they 
belong to? 
Intuitively, it seems that they should be in \textsf{DelayP} but this question 
still needs to be addressed.

The peculiarity of block-parallel update modes is that they give rise to local 
update repetitions during a period. 
This is indeed the case for all block-parallel update modes which are not 
block-sequential (\emph{i.e.}~modes with at least two blocks of distinct sizes 
when defined as a partitioned order, cf.~Lemma~\ref{lemma:BPequivBS}). 
Since we know that local update repetitions can break the fixed point invariance 
property which holds in block-sequential ANs (cf.~the example given in 
Remark~\ref{remark:repet}), it should be pertinent to characterise the 
conditions relating these repetitions to the architecture of interactions 
between the automata defined by the local functions under which the set of 
obtained fixed points is not that obtained with the parallel update mode. 
More generally, as a complement of the results of Section~\ref{s:BPstar}, it 
would be interesting to study the following problem: 
given an AN $f$, to which extent is $f$ block-parallel sensible/robust?
In~\cite{J-Aracena2009,J-Aracena2012,J-Aracena2013}, the authors addressed this 
question on block-sequential Boolean ANs by developing the concept of 
update digraphs which allows to capture conditions of dynamical equivalence at 
the syntactical level. 
However, this concept is not helpful anymore as soon as local update repetitions 
are at stake. 
Hence, creating a new concept of update digraphs in the general context of 
periodic update modes would be an essential step forward to explain and 
understand updating sensitivity/robustness of ANs.

Another track of research would be to understand how basic interaction cycles of 
automata evolve when they are updated in block-parallel. 
For instance, the authors of~\cite{C-Goles2010} have shown that such cycles in 
the Boolean setting are somehow very robust to block-sequential update modes 
variations: the number of their limit cycles of length $p$ is the same as that 
of a smaller cycle (of same sign) evolving in parallel. Together with the 
combinatorial analysis of~\cite{dns12}, this result provides a complete 
analysis of the asymptotic dynamics of Boolean interaction cycles. 
This gives rise to the following question: do interaction cycles behave 
similarly depending on block-parallel update modes variations?
Here also, the local update repetitions should play an essential role. 
Such a study could constitute a first approach of the more general problem 
evoked above, since it is well known that cycles are the behavioural complexity 
engines of ANs~\cite{B-Robert1986}. 

Eventually, since block-parallel update modes form a new family of update modes 
and since the field of investigation related to them is therefore still 
completely open today, we think that a promising perspective of our work would 
consist in dealing with the complexity of classical decision problems for ANs, 
in the lines of~\cite{J-Dennunzio2019} about reaction systems. 
The general question to be addressed here is: 
do local update repetitions induced by block-parallel update modes make such 
decision problems take place at a higher level in the polynomial hierarchy,
or even reach polynomial space completeness?

\paragraph{Acknowledgements}

The authors were funded mainly by their salaries as French State agents. This 
work has furthermore been secondarily supported by ANR-18-CE40-0002 FANs project 
(KP \& SS), ECOS-Sud C19E02 SyDySy project (SS), and STIC AmSud 22-STIC-02 CAMA 
project (KP, LT \& SS).

\bibliographystyle{plain}
\bibliography{biblio}

\begin{thebibliography}{10}

\bibitem{J-Aracena2012}
J.~Aracena, {\'E}.~Fanchon, M.~Montalva, and M.~Noual.
\newblock {ombinatorics on update digraphs in Boolean networks}.
\newblock {\em Discrete Applied Mathematics}, 159:401--409, 2012.

\bibitem{J-Aracena2009}
J.~Aracena, E.~Goles, A.~Moreira, and L.~Salinas.
\newblock {On the robustness of update schedules in Boolean networks}.
\newblock {\em Biosystems}, 97:1--8, 2009.

\bibitem{J-Aracena2013}
J.~Aracena, L.~G{\'o}mez, and L.~Salinas.
\newblock {Limit cycles and update digraphs in Boolean networks}.
\newblock {\em Discrete Applied Mathematics}, 161:1--12, 2013.

\bibitem{B-Conway1982}
E.~Berlekamp, J.~Conway, and R.~Guy.
\newblock {\em {Winning Ways for your Mathematical Plays}}.
\newblock Academic Press, 1982.

\bibitem{J-Cook2004}
M.~Cook.
\newblock {Universality in elementary cellular automata}.
\newblock {\em Complex Systems}, 15:1--40, 2004.

\bibitem{J-Creignou2019}
N.~Creignou, M.~Kr{\"o}ll, R.~Pichler, S.~Skritek, and H.~Vollmer.
\newblock {A complexity theory for hard enumeration problem}.
\newblock {\em Discrete Applied Mathematics}, 268:191--209, 2019.

\bibitem{J-Davidich2008}
M.~I. Davidich and S.~Bornholdt.
\newblock {Boolean network model predicts cell cycle sequence of fission
  yeast}.
\newblock {\em PLoS One}, 3:e1672, 2008.

\bibitem{dns12}
J.~Demongeot, M.~Noual, and S.~Sen{\'e}.
\newblock {Combinatorics of Boolean automata circuits dynamics}.
\newblock {\em Discrete Applied Mathematics}, 160(4--5):398--415, 2012.

\bibitem{J-Demongeot2020}
J.~Demongeot and S.~Sen{\'e}.
\newblock {About block-parallel Boolean networks: a position paper}.
\newblock {\em Natural Computing}, 19:5--13, 2020.

\bibitem{J-Dennunzio2019}
A.~Dennunzio, E.~Formenti, L.~Manzoni, and A.~E. Porreca.
\newblock {Complexity of the dynamics of reaction systems}.
\newblock {\em Information and Computation}, 267:96--109, 2019.

\bibitem{J-Elspas1959}
B.~Elspas.
\newblock {The theory of autonomous linear sequential networks}.
\newblock {\em IRE Transactions on Circuit Theory}, 6:45--60, 1959.

\bibitem{J-Fierz2019}
B.~Fierz and M.~G. Poirier.
\newblock {Biophysics of chromatin dynamics}.
\newblock {\em Annual Review of Biophysics}, 48:321--345, 2019.

\bibitem{C-Gamard2021}
G.~Gamard, P.~Guillon, K.~Perrot, and G.~Theyssier.
\newblock {Rice-like Theorems for automata networks}.
\newblock In {\em Proceedings of STACS'21}, volume 187 of {\em LIPIcs}, pages
  32:1--32:17. Schloss Dagstuhl -- Leibniz-Zentrum f{\"u}r Informatik, 2021.

\bibitem{C-Goles2010}
E.~Goles and M.~Noual.
\newblock {Block-sequential update schedules and Boolean automata circuits}.
\newblock In {\em Proceedings of AUTOMATA'2010}, pages 41--50. DMTCS, 2010.

\bibitem{J-Goles2008}
E.~Goles and L.~Salinas.
\newblock {Comparison between parallel and serial dynamics of Boolean
  networks}.
\newblock {\em Theoretical Computer Science}, 396:247--253, 2008.

\bibitem{J-Hubner2010}
M.~R. H{\"u}bner and D.~L. Spector.
\newblock {Chromatin dynamics}.
\newblock {\em Annual Review of Biophysics}, 39:471--489, 2010.

\bibitem{oeisA182666}
OEIS~Foundation Inc.
\newblock Entry A182666 in the On-Line Encyclopedia of Integer Sequences, 2023.
\newblock \url{https://oeis.org/A182666}.

\bibitem{oeisA000670}
OEIS~Foundation Inc.
\newblock {The Fubini numbers}.
\newblock Entry A000670 in the On-Line Encyclopedia of Integer Sequences, 2023.
\newblock \url{https://oeis.org/A000670}.

\bibitem{oeisA000262}
OEIS~Foundation Inc.
\newblock {The number of ``sets of lists''}.
\newblock Entry A000262 in the On-Line Encyclopedia of Integer Sequences, 2023.
\newblock \url{https://oeis.org/A000262}.

\bibitem{oeisA061095}
OEIS~Foundation Inc.
\newblock {The number of ways of dividing $n$ labeled items into labeled boxes
  with an equal number of items in each box}.
\newblock Entry A061095 in the On-Line Encyclopedia of Integer Sequences, 2023.
\newblock \url{https://oeis.org/A061095}.

\bibitem{J-Kari1994}
J.~Kari.
\newblock {Rice's theorem for the limit sets of cellular automata}.
\newblock {\em Theoretical Computer Science}, 127:229--254, 1994.

\bibitem{J-Karlebach2008}
G.~Karlebach and R.~Shamir.
\newblock {Modelling and analysis of gene regulatory networks}.
\newblock {\em Nature Reviews Molecular Cell Biology}, 9:770--780, 2008.

\bibitem{J-Kauffman1969}
S.~A. Kauffman.
\newblock {Metabolic stability and epigenesis in randomly constructed genetic
  nets}.
\newblock {\em Journal of Theoretical Biology}, 22:437--467, 1969.

\bibitem{T-Kelleher2005}
J.~Kelleher.
\newblock {\em Encoding Partitions As Ascending Compositions}.
\newblock PhD thesis, University College Cork, 2005.

\bibitem{CL-Kleene1956}
S.~C. Kleene.
\newblock {\em Automata studies}, volume~34 of {\em Annals of Mathematics
  Studies}, chapter Representation of events in nerve nets and finite automata,
  pages 3--41.
\newblock Princeton Universtity Press, 1956.

\bibitem{J-McCulloch1943}
W.~S. McCulloch and W.~Pitts.
\newblock {A logical calculus of the ideas immanent in nervous activity}.
\newblock {\em Journal of Mathematical Biophysics}, 5:115--133, 1943.

\bibitem{J-Mendoza1998}
L.~Mendoza and E.~R. Alvarez-Buylla.
\newblock {Dynamics of the genetic regulatory network for \emph{Arabidopsis
  thaliana} flower morphogenesis}.
\newblock {\em Journal of Theoretical Biology}, 193:307--319, 1998.

\bibitem{T-Noual2012}
M.~Noual.
\newblock {\em Updating automata networks}.
\newblock PhD thesis, {\'E}cole normale sup{\'e}rieure de Lyon, 2012.

\bibitem{ns11}
M.~Noual and S.~Sen{\'e}.
\newblock {Towards a theory of modelling with Boolean automata networks~-~I.
  Theorisation and observations}, 2011.
\newblock arXiv:1111.2077.

\bibitem{J-Noual2018}
M.~Noual and S.~Sen{\'e}.
\newblock {Synchronism versus asynchronism in monotonic Boolean automata
  networks}.
\newblock {\em Natural Computing}, 17:393--402, 2018.

\bibitem{BC-Pauleve2022}
L.~Paulev{\'e} and S.~Sen{\'e}.
\newblock {\em {Systems biology modelling and analysis: formal bioinformatics
  methods and tools}}, chapter {Boolean networks and their dynamics: the impact
  of updates}, pages 173--250.
\newblock Wiley, 2022.

\bibitem{M-Rios2021}
M.~R{\'i}os~Wilson and G.~Theyssier.
\newblock {On symmetry versus asynchronism: at the edge of universality in
  automata networks}.
\newblock arXiv:2105.08356, 2021.

\bibitem{J-Robert1980}
F.~Robert.
\newblock {It{\'e}rations sur des ensembles finis et automates cellulaires
  contractants}.
\newblock {\em Linear Algebra and its Applications}, 29:393--412, 1980.

\bibitem{B-Robert1986}
F.~Robert.
\newblock {\em {Discrete iterations: a metric study}}, volume~6 of {\em
  Springer Series in Computational Mathematics}.
\newblock Springer, 1986.

\bibitem{J-Shmulevich2002}
I.~Shmulevich, E.~R. Dougherty, and Wei Zhang.
\newblock {From Boolean to probabilistic Boolean networks as models of genetic
  regulatory networks}.
\newblock {\em Proceedings of the IEEE}, 90:1778--1792, 2002.

\bibitem{J-Smith1971}
A.R. Smith~III.
\newblock {Simple computation-universal cellular spaces}.
\newblock {\em Journal of the ACM}, 18:339--353, 1971.

\bibitem{J-Thieffry1995}
D.~Thieffry and R.~Thomas.
\newblock {Dynamical behaviour of biological regulatory networks~--~II.
  Immunity control in bacteriophage lambda}.
\newblock {\em Bulletin of Mathematical Biology}, 57:277--297, 1995.

\bibitem{J-Thomas1973}
R.~Thomas.
\newblock {Boolean formalization of genetic control circuits}.
\newblock {\em Journal of Theoretical Biology}, 42:563--585, 1973.

\bibitem{B-vonNeumann1966}
J.~von Neumann.
\newblock {\em {Theory of self-reproducing automata}}.
\newblock University of Illinois Press, 1966.
\newblock Edited and completed by A. W. Burks.

\end{thebibliography}

\end{document}